\lstdefinestyle{term}{
  basicstyle=\ttfamily\small,
  breaklines=true,        
  keepspaces=true,        
  columns=fullflexible,
  showstringspaces=false
}
\renewcommand{\emph}[1]{\textbf{#1}}
\newcommand{\idop}{\mathbbm 1} 
\newcommand{\numberthis}{\addtocounter{equation}{1}\tag{\theequation}} 
\newcounter{Aeq}
\newcommand{\E}{\mathcal{E}}
\newcommand{\povm}[1]{\mathbf{#1}}
\newcommand{\openone}{\mathds{1}}
\newcommand{\sH}{\mathcal{H}}
\newcommand{\mM}{\mathcal{M}}
\renewcommand{\set}[1]{\mathcal{#1}}
\renewcommand{\ge}{\geqslant}
\renewcommand{\geq}{\geqslant}
\renewcommand{\le}{\leqslant}
\renewcommand{\leq}{\leqslant}
\newcommand{\Tr}{\operatorname{Tr}}
\newcommand{\ketbra}[1]{|{#1}\rangle\!\langle{#1}|}
\theoremstyle{plain} 
\newtheorem{theorem}{Theorem}
\newtheorem{definition}{Definition}
\theoremstyle{definition}
\theoremstyle{remark}
\newtheorem{remark}{Remark}
\newcommand{\email}[1]{\href{mailto:#1}{#1}}
\title{Quantum measurement retrodiction\\and entropic uncertainty relations}
\author[ ]{Jiaxi Kuang\thanks{\email{kuang.jiaxi.g5@s.mail.nagoya-u.ac.jp}}}
\author[ ]{Kensei Torii\thanks{\email{toriikensei@nagoya-u.jp}}}
\author[ ]{Francesco Buscemi\thanks{\email{buscemi@nagoya-u.jp}}}
\affil[ ]{Department of Mathematical Informatics, Nagoya University, Japan}
\date{}
\begin{document}

\maketitle

\begin{abstract}
We study quantum measurement retrodiction using the principle of minimum change. For quantum-to-classical measurement channels, we show that all standard quantum divergences select the same retrodictive update, yielding a unique and divergence-independent quantum Bayesian inverse for any POVM and prior state. Using this update, we construct a symmetric joint distribution for pairs of POVMs and introduce the \textit{mutual retrodictability}, for which we also derive a general upper bound that depends only on the prior state and holds for all measurements. This structure leads to two retrodictive entropic uncertainty relations, expressed directly in terms of the prior state and the POVMs, but valid independently of the retrodictive framework and fully compatible with the conventional operational interpretation of entropic uncertainty relations. Finally, we benchmark these relations numerically and find that they provide consistently tighter bounds than existing entropic uncertainty relations over broad classes of measurements and states.
\end{abstract}

\section{Introduction}

Let us begin by considering a finite-state classical system, whose states are labeled by the elements of a finite set $\set{S}=\{s\}$. Suppose that the actual, or ``true,'' state of the system is unknown to us. Instead, our uncertainty about it is captured by a prior distribution $\gamma(s)$. The system then undergoes a measurement whose possible outcomes form another finite set $\set{X}=\{x\}$, and whose statistics are governed by a known likelihood function $\varphi(x|s)$, the probability of observing outcome $x$ given that the system is in state $s$.

When a particular outcome $\bar x\in\set{X}$ is observed, Bayes' rule prescribes how our prior beliefs should be updated in light of this new information:
\begin{align}\label{eq:vanilla-bayes}
\hat\gamma(s|\bar x):=\frac{\gamma(s)\;\varphi(\bar x|s)}{\sum_s\gamma(s)\;\varphi(\bar x|s)}\;.
\end{align}
This simple expression lies at the heart of all probabilistic inference and numerous rigorous derivations have been proposed to justify its universal applicability and remarkable empirical success across the sciences~\cite{polya1954induction,cox1961algebra,definetti1974theory,jeffrey,zellner1988optimal,jeffreys1998theory,jaynes_2003}.

Once the prior has been updated to incorporate the newly acquired information, namely the observed outcome $\bar x$, it can be used to \textit{predict} the statistics of other observations. Consider, for instance, another measurement with outcomes in the set $\set{Y}=\{y\}$ and likelihood function $\psi(y|s)$. The updated belief $\hat\gamma(s|\bar x)$ allows us to compute the expected probability of each outcome $y$ as
\begin{align}\label{eq:classical-retrodiction}
\Pr\{y|\bar x\}=\sum_{s\in\set{S}}\hat\gamma(s|\bar x)\;\psi(y|s)\;.
\end{align}
This quantity $\Pr\{y|\bar x\}$ represents our rational expectation for the probability of obtaining outcome $y$ in the yet-unobserved measurement, given that the first observation has already yielded $\bar x$. It encapsulates the predictive power of Bayesian reasoning: from one piece of evidence, we infer a consistent net of expectations for future data.

The above argument, a direct consequence of Bayes' rule and the total law of probability, admits a natural generalization to quantum theory. Consider a bipartite quantum system described by a state $\rho_{AB}$, and suppose we perform a local measurement on subsystem $A$ described by a positive operator-valued measure (POVM) $\povm{M}:=\{M_x:x\in\set{X}\}$, i.e., a family of positive semidefinite operators $M_x\ge 0$ such that $\sum_{x\in\set{X}}M_x=\openone$. Upon obtaining a specific outcome $\bar x\in\set{X}$, the state of subsystem $B$ is updated, according to Born’s rule, to
\[
\rho_{B|\bar x}:=\frac{\Tr_A\{(M_{\bar x}\otimes\openone_B)\ \rho_{AB}\}}{\Tr\{(M_{\bar x}\otimes\openone_B)\ \rho_{AB}\}}\;.
\]
This updated state encodes our new expectations about any subsequent measurement that might be performed on $B$.

In the classical case, the quantity $\hat\gamma(s|\bar x)$ admits another interpretation: it can be viewed not merely as a prediction about future events, but as a \textit{retrodiction}, an inference about the system’s state \textit{prior} to the measurement. Retrodictive reasoning underlies the logic of diagnostic sciences, from medicine to forensics. Though the explicit term ``retrodiction'' emerged only in the past century, its conceptual roots trace back to Laplace, who described it as ``the probability of causes given events''~\cite{laplace-transl}. 

In quantum mechanics, however, this \textit{backward-in-time interpretation} encounters deep conceptual subtleties: measurement outcomes do not, in general, reveal preexisting properties of the system. Consequently, the retrodictive use of Bayes’ rule in quantum theory has long been a fertile ground for discussion and insight; for a recent overview of these developments, see Ref.~\cite{barnett-2021-retro-review,Jeffers2024QuantumRetrodiction}. A possible source of disagreement in the literature may lie in the variety of approaches that have been adopted. To the best of our knowledge, the quantum versions of classical Bayesian retrodiction proposed so far typically fall into one of the following categories: they are either \textit{semiclassical}, where the underlying process is quantum but the retrodiction concerns only the classical statistics of the outcomes (such as in Refs.~\cite{watanabe55,barnett-pegg-jeffers}); or they are defined \textit{by analogy}, in the sense that the resulting expressions reduce to classical retrodiction when operators commute, yet may yield non-positive quasi-probability distributions in the general case (such as in~\cite{johansen-2007-weak-meas-and-quasi-prob}); or they are derived from a set of \textit{axioms} or desiderata whose appeal and physical justification can vary~\cite{Parzygnat2023axiomsretrodiction}.

Only very recently, a fully quantum generalization of Bayes' rule has been derived from the \textit{minimum change principle}~\cite{bai-2025-q-bayes-minimum-change}. This principle postulates that the update of one's belief should proceed through a reverse (i.e., retrodictive) process that remains consistent with the newly acquired information while, at the same time, minimizing the statistical divergence from the corresponding forward (i.e., predictive) process~\cite{zellner1988optimal}. In essence, it embodies a conservative epistemic stance, ensuring that the belief revision faithfully reflects the new data without introducing unintended or spurious biases. In this respect, it seems to be related (at least ``morally'') with Jaynes' Maximum Entropy Principle~\cite{jaynes1957information,jaynes_2003,Caticha_2006}, although see Refs.~\cite{Seidenfeld_1986} and~\cite{UFFINK1996}.

In this paper we focus on the task of retrodiction based on the outcome of a quantum measurement. By restricting the analysis to quantum-to-classical (measurement) channels, we show that the quantum minimum change principle of Ref.~\cite{bai-2025-q-bayes-minimum-change} naturally extends to a broad family of statistical divergences. All such divergences yield the same optimal retrodicted quantum state, revealing the universality of the retrodictive update rule within this semiclassical setting. This universality shows that the resulting update rule is not an artifact of a particular divergence measure, but rather a robust feature of the underlying variational structure of quantum inference. Building upon this result, we employ the retrodicted state to formulate a new class of \textit{retrodictive entropic uncertainty relations}, which capture the information-theoretic trade-offs inherent in backward-in-time quantum reasoning. Importantly, although derived through a retrodictive construction, these relations are valid independently of that narrative and admit the conventional operational interpretation of entropic uncertainty relations. Finally, we benchmark these relations against the well-known entropic uncertainty bounds of Ref.~\cite{berta2010uncertainty}, showing that our approach yields tighter bounds in a wide range of scenarios.

The paper is organized as follows. In Sec.~\ref{sec:2} we recall the minimum change principle and specialize it to quantum measurement channels, showing that this restriction allows us to extend the principle from fidelity to a broad class of statistical divergences and that all such divergences select the same minimum--change retrodictive update. In Sec.~\ref{sec:3} we examine the resulting retrodicted states and develop the symmetric retrodictive joint probability that underpins our later constructions. In Sec.~\ref{sec:4} we introduce the mutual retrodictability $R(\povm{M};\povm{N})_\gamma$ and establish its general upper bound in Theorem~\ref{th:mutual-bound-2}. In Sec.~\ref{sec:retro-EUR} we use this framework to derive the retrodictive entropic uncertainty relations \eqref{eq:REUR-1} and~\eqref{eq:REUR-2}. In Sec.~\ref{sec:numericalresult} we benchmark our bounds numerically against the entropic uncertainty relation of Berta \textit{et al.}~\cite{berta2010uncertainty}. Finally, in Sec.~\ref{sec:conclusion} we summarize our results and outline directions for future work.

\section{Minimum change principle for quantum measurements}\label{sec:2}

For completeness, we briefly recall how the minimum change principle provides a general foundation for belief updating in a classical information-theoretic setting. We view $\varphi(x|s)$ as a classical channel from $S$ to $X$, and $\gamma(s)$ as the input distribution. The joint law
\[
P_{\rm fwd}(s,x):=\varphi(x|s)\gamma(s)\;,
\]
therefore represents the complete operational process describing the information flow from system states to observable outcomes. Here $\gamma(s)$ encodes the prior belief about $S$, while the likelihood $\varphi(x|s)$ specifies the channel correlation between $S$ and $X$.

When new information about $X$ is acquired, it typically takes the form of a constraint on the marginal of $X$. We denote this target distribution by $\tau(x)$. It may represent either a fully specified measurement result, as in the textbook Bayes update where $\tau(x)=\delta(x,\bar x)$, or a noisy or coarse-grained form of evidence~\cite{jeffrey}. The updated belief should incorporate this constraint while deviating from the prior operational process as little as possible.

The requirement of minimum information change can be cast as an information projection problem:
\begin{align} \label{eq:clax_optimize}
    \min_{R} \mathbb{D}(R\|P_{\rm fwd})\;,
\end{align}
where $\mathbb{D}$ is a statistical divergence. The variable $R$ ranges over all joint distributions consistent with the acquired information, i.e., of the form $R(s,x)=\tau(x)\hat\varphi(s|x)$ for some normalized conditional distribution $\hat\varphi(s|x)$. Thus the optimization in~\eqref{eq:clax_optimize} is over the posterior channel $\hat\varphi(s|x)$ only. In the language of information geometry, this corresponds to the projection of $P_{\rm fwd}$ onto the affine set $\{R_{SX}:R_{X}=\tau\}$, being $R_X$ the marginal of $R_{SX}$.

A key feature of the minimum change principle is that it applies to the full joint input--output process, rather than only to its marginals. In accordance with Bayesian practice, we assume that the prior process assigns nonzero probability to every trajectory, i.e., $P_{\rm fwd}(s,x)>0$ for all $s\in\set{S}$ and $x\in\set{X}$. Under this regularity assumption, and for a broad class of divergences (including all strictly convex $f$-divergences), the minimizer is unique and given by
\begin{align}
P_{\rm rev}:=\arg\min_{R} \mathbb{D}(R\|P_{\rm fwd})=\frac{P_{\rm fwd}(s,x)}{\sum_{s'} P_{\rm fwd}(s',x)} \tau(x)\;,
\end{align}
thus recovering the classical Bayesian update.

In this paper we focus on a generalization of the minimum change principle to the special case of quantum measurements. We consider a finite-dimensional quantum system $Q$ with Hilbert space $\sH_Q$, subjected to a measurement represented by a positive operator-valued measure (POVM) $\povm{M}=\{M_x:x\in\set{X}\}$, where $\set{X}$ is a finite set of outcomes and $M_x$ are positive semidefinite operators satisfying the completeness relation $\sum_{x\in\set{X}}M_x=\openone_Q$. We say that the POVM $\povm{M}$ is \textit{rank-one} whenever all its non-zero elements are rank-one operators. If the state of the system is represented by a density operator $\gamma_Q$, the probability of obtaining the outcome $x$ is given by $p(x)=\Tr\{M_x\;\gamma_Q\}$. It is convenient to summarize the resulting outcome distribution by introducing a quantum-to-classical channel, namely a completely positive trace-preserving map
\begin{align}\label{eq:qc-channel}
\mM(\gamma_Q):=\sum_{x\in\set{X}}\Tr\{M_x\;\gamma_Q\}\ketbra{x}_X\;.
\end{align}
Here the kets $\ket{x}$ form a fixed orthonormal basis of a space $\sH_X\cong\mathbb{C}^{|\set{X}|}$. This is a purely formal device: we will never consider superpositions of such vectors. It provides a compact way to treat the classical random variable $X$ representing the measurement outcome within the quantum formalism.

Ref.~\cite{bai-2025-q-bayes-minimum-change} developed a framework for studying the minimum change principle for general quantum channels. In this work, we restrict attention to quantum-to-classical (qc) channels of the form~\eqref{eq:qc-channel}. Their semiclassical structure makes them considerably easier to analyze, and allows us to extend the minimum change principle to a broad family of statistical divergences, unlike Ref.~\cite{bai-2025-q-bayes-minimum-change} which focused on fidelity.

Considering only qc channels also provides a natural analogue of the joint input--output distributions used in the classical setting. In fully noncommutative scenarios, the absence of a straightforward quantum analogue of an input–output joint state complicates the formulation~\cite{bai-2025-q-bayes-minimum-change}. For measurement channels, however, we may introduce the bipartite state
\[
\gamma_{XQ}:=\sum_{x\in\set{X}}\ketbra{x}\otimes\sqrt{\gamma_Q}M_x\sqrt{\gamma_Q}\;.
\]
It is immediate to verify that $\Tr_Q\{\gamma_{XQ}\}=\sum_xp(x)\ketbra{x}_X$ and $\Tr_X\{\gamma_{XQ}\}=\gamma_Q$, so the marginals recover the expected output distribution and the input state, respectively. Aside from a partial transposition that plays no role here, this coincides with the definition used in~\cite{bai-2025-q-bayes-minimum-change}, based on the Choi isomorphism and the standard purification~\cite{wilde2017quantum-book}.

The minimum change principle for quantum measurement channels can now be formulated as an information projection
\begin{align}\label{eq:q_optimize}
\min_{\sigma_{XQ}} \mathbb{D}(\sigma_{XQ}\|\gamma_{XQ})\;,    
\end{align}
between the forward bipartite state $\gamma_{XQ}$, which encodes the prior process for the measurement of $Q$, and a backward bipartite state $\sigma_{XQ}$ satisfying the following conditions:
\begin{enumerate}
    \item its marginal state satisfies $\sigma_X:=\Tr_Q\{\sigma_{XQ}\}=\sum_{x\in\set{X}}\tau(x)\ketbra{x}_X$, representing the newly acquired information about the outcome distribution: this condition is imposed as a constraint in the minimization;
    \item the backward channel, the analogue of $\hat\varphi(s|x)$ in~\eqref{eq:clax_optimize}, is now a \textit{classical-to-quantum} (cq) channel that associates to each $x\in\set{X}$ a quantum state $\sigma_{Q|x}$ on $\sH_Q$. Consequently, the bipartite state $\sigma_{XQ}$ takes the form
    \[
    \sigma_{XQ}= \sum_{x\in\set{X}}\tau(x)\ketbra{x}_X\otimes\sigma_{Q|x}\;.
    \]
\end{enumerate}

\begin{theorem}\label{th:divergences}
    Assume $\Tr\{\gamma_Q\;M_x\}>0$ for all $x\in\set{X}$. For the following quantum divergences (below, $A$ and $B$ denote density matrices, and we assume $B>0$):
    \begin{enumerate}
        \item Umegaki relative entropy~\cite{umegaki-q-rel-ent-1961}: $\mathbb{D}(A\|B)\equiv D(A\| B):=\Tr\{A(\log_2 A-\log_2 B)\}$;
        \item Petz--R\'enyi relative entropy ~\cite{petz1985quasi,petz1986quasi}: $D_\alpha(A\|B):=\frac{1}{\alpha-1}\log_2\Tr\{A^\alpha B^{1-\alpha}\}$, for $\alpha\in(0,1)\cup(1,2]$;
        \item sandwiched (panini) R\'enyi relative entropy ~\cite{muller2013quantum,wilde2014strong}:\\ $\tilde{D}_\alpha(A\|B):=\frac{1}{\alpha-1}\log_2\Tr\bigl\{\bigl(B^{\frac{1-\alpha}{2\alpha}}AB^{\frac{1-\alpha}{2\alpha}}\bigr)^\alpha\bigr\}$, for $\alpha\in[1/2,1)\cup(1,\infty)$;
        \item geometric R\'enyi relative entropy ~\cite{petz1998constraction,matsumoto2015new}:\\ $\hat{D}_\alpha(A\|B):=\frac{1}{\alpha-1}\log_2\Tr\{B\left(B^{-1/2}AB^{-1/2}\right)^\alpha\}$, for $\alpha\in(0,1)\cup(1,2]$;
        \item Belavkin--Staszewski relative entropy ~\cite{belavkin1982c}: $D_{BS}(A\|B):=\Tr\{A\log_2\left(A^{1/2}B^{-1}A^{1/2}\right)\}$;
    \end{enumerate}
    we have
    \[
        \min_{\sigma_{XQ}} \mathbb{D}(\sigma_{XQ}\|\gamma_{XQ})=\mathbb{D}(\sigma_{X}\|\gamma_{X})\equiv\mathbb{D}(\tau\|p)\;,
    \]
    where $\tau\equiv \tau(x)$ and $p\equiv p(x)=\Tr\{\gamma_Q\;M_x\}$, and the minimum is uniquely achieved by
    \begin{align}\label{eq:optimal-retrodiction}
    \hat\gamma_{XQ}:=\arg\min_{\sigma_{XQ}} \mathbb{D}(\sigma_{XQ}\|\gamma_{XQ})=\sum_{x\in\set{X}}\tau(x)\ketbra{x}_X\otimes\frac{\sqrt{\gamma_Q}M_x\sqrt{\gamma_Q}}{\Tr\{M_x\;\gamma_Q\}}\;.
    \end{align}
\end{theorem}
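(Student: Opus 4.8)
The plan is to exploit the fact that both the forward state $\gamma_{XQ}$ and every admissible backward state $\sigma_{XQ}$ are block-diagonal with respect to the orthonormal basis $\{\ket{x}\}$ of $\sH_X$. Writing $p(x)=\Tr\{M_x\gamma_Q\}>0$ and $\hat\gamma_{Q|x}:=\sqrt{\gamma_Q}M_x\sqrt{\gamma_Q}/p(x)$, which is a bona fide density operator since $M_x\ge0$ forces positivity while the normalization gives unit trace, the two states read $\gamma_{XQ}=\sum_x p(x)\ketbra{x}\otimes\hat\gamma_{Q|x}$ and $\sigma_{XQ}=\sum_x\tau(x)\ketbra{x}\otimes\sigma_{Q|x}$. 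Because the $X$-blocks are mutually orthogonal, any operator function applied to these states acts blockwise, and scalar prefactors factor out: for a power one has $(c\,C)^\beta=c^\beta C^\beta$, and for the logarithm $\log_2(c\,C)=(\log_2 c)\openone+\log_2 C$. This single observation reduces every divergence in the list to a combination of the classical divergence $\mathbb{D}(\tau\|p)$ and the per-outcome quantum divergences $\mathbb{D}(\sigma_{Q|x}\|\hat\gamma_{Q|x})$.

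First I would treat the two ``entropic'' cases (Umegaki and Belavkin--Staszewski), where the logarithm produces an exact additive split
\[
\mathbb{D}(\sigma_{XQ}\|\gamma_{XQ})=\mathbb{D}(\tau\|p)+\sum_{x}\tau(x)\,\mathbb{D}(\sigma_{Q|x}\|\hat\gamma_{Q|x})\;.
\]
Since each per-outcome term is non-negative and independent of the first, the objective is minimized precisely by setting $\sigma_{Q|x}=\hat\gamma_{Q|x}$ wherever $\tau(x)>0$, driving the second sum to zero and leaving the value $\mathbb{D}(\tau\|p)$. Next I would handle the three R\'enyi-type cases (Petz, sandwiched, geometric) together. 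In each, the same blockwise calculus turns the trace functional inside the logarithm into $\sum_x\tau(x)^\alpha p(x)^{1-\alpha}\,2^{(\alpha-1)\mathbb{D}_\alpha(\sigma_{Q|x}\|\hat\gamma_{Q|x})}$, where $\mathbb{D}_\alpha$ is the corresponding per-outcome R\'enyi divergence. Using its non-negativity, the block factor obeys $2^{(\alpha-1)\mathbb{D}_\alpha(\sigma_{Q|x}\|\hat\gamma_{Q|x})}\ge1$ for $\alpha>1$ and $\le1$ for $\alpha<1$, with equality iff $\sigma_{Q|x}=\hat\gamma_{Q|x}$; combined with the sign of the prefactor $1/(\alpha-1)$ in front of $\log_2$, this shows that $\mathbb{D}_\alpha(\sigma_{XQ}\|\gamma_{XQ})$ is minimized, for both ranges of $\alpha$, exactly at $\sigma_{Q|x}=\hat\gamma_{Q|x}$, with minimal value $\mathbb{D}_\alpha(\tau\|p)$.

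Collecting the two families yields the common minimizer $\hat\gamma_{XQ}=\sum_x\tau(x)\ketbra{x}\otimes\hat\gamma_{Q|x}$ and minimal value $\mathbb{D}(\tau\|p)$ asserted in~\eqref{eq:optimal-retrodiction}. Uniqueness follows from the faithfulness of each divergence, as the equality condition $\sigma_{Q|x}=\hat\gamma_{Q|x}$ is forced on every block with $\tau(x)>0$, while blocks with $\tau(x)=0$ contribute the null operator regardless, so $\hat\gamma_{XQ}$ is uniquely determined as an operator. The only technical care concerns the divergences involving inverse powers of $B$ (geometric R\'enyi, Belavkin--Staszewski, and sandwiched): there the inverses must be read as generalized inverses on the support of $\hat\gamma_{Q|x}$, and any $\sigma_{Q|x}$ whose support is not contained in that of $\hat\gamma_{Q|x}$ gives $+\infty$ and is automatically excluded from the minimization. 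I expect the main obstacle to be organizational rather than conceptual: verifying the blockwise identity and the equality conditions uniformly across all five definitions, and in particular ensuring that the $\alpha<1$ branch of the R\'enyi family, where $1/(\alpha-1)<0$ reverses the direction of optimization, still selects the same minimizer $\hat\gamma_{Q|x}$.
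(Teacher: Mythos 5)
Your proposal is correct and follows essentially the same route as the paper's proof: both rest on the block-diagonal (direct-sum) decomposition of $\sigma_{XQ}$ and $\gamma_{XQ}$ with respect to the basis $\{\ket{x}\}$, the reduction to a classical term $\mathbb{D}(\tau\|p)$ plus per-outcome quantum divergences, and faithfulness to force $\sigma_{Q|x}=\hat\gamma_{Q|x}$ on every relevant block. The only differences are minor: the paper frames the lower bound via monotonicity under partial trace before verifying achievability, whereas you obtain it directly from the sign analysis of the $1/(\alpha-1)$ prefactor in the R\'enyi cases; you are also somewhat more explicit about the $\tau(x)=0$ blocks and the support/generalized-inverse conventions, points the paper leaves implicit.
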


\begin{proof}
    The claim follows from three properties satisfied by all divergences listed above: the direct-sum property, monotonicity under partial trace, and faithfulness. In particular, monotonicity implies that
\[
    \min_{\sigma_{XQ}}\mathbb{D}(\sigma_{XQ}\|\gamma_{XQ})
    \ge \mathbb{D}(\tau\|p)\;,
\]
so that any state achieving the right–hand side is necessarily a global minimizer.

The argument proceeds in the same way for each divergence in the list. For illustration, we discuss two representative cases. First, consider the Umegaki relative entropy, for which we have:
\begin{align*}
        D(\sigma_{XQ}\|\gamma_{XQ})&=D\Big( \sum_{x\in\set{X}}\tau(x)\ketbra{x}_X\otimes\sigma_{Q|x} \Big\| \sum_{x\in\set{X}} p(x) \ketbra{x}_X\otimes\frac{\sqrt{\gamma_Q}M_{x}\sqrt{\gamma_Q}}{\Tr{\gamma_Q M_{x}}} \Big)\\
        &= D(\tau\|p) + \sum_{x\in\set{X}}p(x)D\Big(\sigma_{Q|x}\Big\|\frac{\sqrt{\gamma_Q}M_{x}\sqrt{\gamma_Q}}{\Tr{\gamma_Q M_{x}}}\Big), 
    \end{align*}
    where the second equality is verified by the direct sum property.
    The minimum of $D(\sigma_{XQ}\|\gamma_{XQ})$ is achieved when the second term vanishes, and from the faithfulness this means
    \begin{align}
    \sigma_{Q|x}=\frac{\sqrt{\gamma_Q}M_{x}\sqrt{\gamma_Q}}{\Tr{\gamma_Q M_{x}}} \label{sigma_Qx}
    \end{align}
    for all $x\in\set{X}$. 
    Next, for the sandwiched R\'enyi relative entropy, we obtain
    \begin{align*}
         \tilde{D}_\alpha(\sigma_{XQ}\|\gamma_{XQ})&= \tilde{D}_\alpha\Big( \sum_{x\in\set{X}}\tau(x)\ketbra{x}_X\otimes\sigma_{Q|x} \Big\| \sum_{x\in\set{X}} p(x) \ketbra{x}_X\otimes\frac{\sqrt{\gamma_Q}M_{x}\sqrt{\gamma_Q}}{\Tr{\gamma_Q M_{x}}} \Big)\\
        &= \frac{1}{\alpha-1} \log_2 \sum_{x\in\set{X}} \tau(x)^\alpha p(x)^{1-\alpha} \Tr{\left( \Big(\frac{\sqrt{\gamma_Q}M_x\sqrt{\gamma_Q}}{\Tr{\gamma_Q M_x}}\Big)^{\frac{1-\alpha}{2\alpha}} \sigma_{Q|x}\Big(\frac{\sqrt{\gamma_Q}M_x\sqrt{\gamma_Q}}{\Tr{\gamma_Q M_x}}\Big)^{\frac{1-\alpha}{2\alpha}} \right)^\alpha}\\
        &=  \frac{1}{\alpha-1} \log_2 \sum_{x\in\set{X}} \tau(x)^\alpha p(x)^{1-\alpha} \; 2^E,\\
    \end{align*}
    where
    \[
    E:=(\alpha-1)\tilde{D}_\alpha \Big(\sigma_{Q|x}\Big\|\frac{\sqrt{\gamma_Q}M_x\sqrt{\gamma_Q}}{\Tr{\gamma_Q M_x}}\Big).
    \]
    The second equality again follows from the direct-sum property. 
    The minimum of $\tilde{D}_\alpha(\sigma_{XQ}\|\gamma_{XQ})$ is achieved when 
    \[
    \tilde{D}_\alpha \Big(\sigma_{Q|x}\Big\|\frac{\sqrt{\gamma_Q}M_x\sqrt{\gamma_Q}}{\Tr{\gamma_Q M_x}}\Big)=0
    \]
    for all $x\in\set{X}$, for all $\alpha \in[1/2,1)\cup(1,\infty)$, thus implying Eq.~\eqref{sigma_Qx}.
    Similar arguments hold for the other cases. 
    Therefore, we conclude
    \begin{align*}
        \min_{\sigma_{XQ}}\mathbb{D}(\sigma_{XQ}\|\gamma_{XQ})=\mathbb{D}(\sigma_X\|\gamma_X),
    \end{align*}
    and the minimum is achieved by $\sigma_{XQ}=\hat{\gamma}_{XQ}$.\\
\end{proof}

Besides relative entropies, the same conclusion holds also when the statistical divergence between forward and reverse processes is measured using Uhlmann fidelity (this follows directly from Ref.~\cite{bai-2025-q-bayes-minimum-change}) and trace-distance:

\begin{theorem}\label{th:trace-dist}
    With the same conventions as above, the minimum
    \[
    \min_{\sigma_{XQ}} \frac12\left\|\sigma_{XQ}-\gamma_{XQ}\right\|_1=\frac 12\sum_{x\in\set{X}}|\tau(x)-p(x)|\;,
    \]
    is uniquely achieved by $\hat\gamma_{XQ}$ in~\eqref{eq:optimal-retrodiction}.
\end{theorem}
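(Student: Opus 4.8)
The plan is to exploit the fact that both the forward state $\gamma_{XQ}$ and every admissible backward state $\sigma_{XQ}$ are classical--quantum states, block-diagonal with respect to the orthonormal basis $\{\ket{x}\}$ of $\sH_X$. First I would rewrite the forward state as $\gamma_{XQ}=\sum_{x}p(x)\,\ketbra{x}_X\otimes\rho_x$ with $\rho_x:=\sqrt{\gamma_Q}M_x\sqrt{\gamma_Q}/\Tr\{M_x\gamma_Q\}$ a genuine density operator (well defined by the assumption $p(x)=\Tr\{M_x\gamma_Q\}>0$). The difference is then $\sigma_{XQ}-\gamma_{XQ}=\sum_{x}\ketbra{x}_X\otimes\big(\tau(x)\sigma_{Q|x}-p(x)\rho_x\big)$, which is again block-diagonal. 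Since the blocks are supported on mutually orthogonal subspaces, the trace norm splits as $\|\sigma_{XQ}-\gamma_{XQ}\|_1=\sum_{x}\|\tau(x)\sigma_{Q|x}-p(x)\rho_x\|_1$, reducing the whole problem to $|\set{X}|$ independent one-block minimizations, one for each outcome.

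For the lower bound I would invoke $\|A\|_1\ge|\Tr A|$, valid for any operator $A$. Because $\Tr\big(\tau(x)\sigma_{Q|x}-p(x)\rho_x\big)=\tau(x)-p(x)$ (both $\sigma_{Q|x}$ and $\rho_x$ being normalized), each block satisfies $\|\tau(x)\sigma_{Q|x}-p(x)\rho_x\|_1\ge|\tau(x)-p(x)|$, so that $\tfrac12\|\sigma_{XQ}-\gamma_{XQ}\|_1\ge\tfrac12\sum_{x}|\tau(x)-p(x)|$. Achievability is then immediate: substituting the retrodicted state \eqref{eq:optimal-retrodiction}, i.e.\ taking $\sigma_{Q|x}=\rho_x$, gives $\tau(x)\rho_x-p(x)\rho_x=(\tau(x)-p(x))\rho_x$, whose trace norm equals $|\tau(x)-p(x)|$ since $\|\rho_x\|_1=1$. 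This shows the minimum value is $\tfrac12\sum_{x}|\tau(x)-p(x)|$ and that it is attained at $\hat\gamma_{XQ}$.

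The step I expect to be the main obstacle is \emph{uniqueness}. The natural route is to characterize equality in $\|A\|_1\ge|\Tr A|$: for Hermitian $A$ this holds precisely when all eigenvalues share one sign, i.e.\ $A\ge0$ or $A\le0$. Applied block by block, a minimizer must satisfy $\tau(x)\sigma_{Q|x}-p(x)\rho_x\ge0$ whenever $\tau(x)>p(x)$ and $\tau(x)\sigma_{Q|x}-p(x)\rho_x\le0$ whenever $\tau(x)<p(x)$. The difficulty is that these are sign-definiteness conditions, not equalities, and each is satisfied by an entire family of density operators $\sigma_{Q|x}$ rather than by $\rho_x$ alone (e.g.\ when $\tau(x)>p(x)$ one may add any positive perturbation to $\tfrac{p(x)}{\tau(x)}\rho_x$). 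Unlike the strictly convex divergences of Theorem~\ref{th:divergences}, the trace norm is only piecewise linear, so strict convexity cannot be used to pin down the minimizer. I therefore anticipate that establishing genuine uniqueness is the delicate part: it would require isolating an additional structural constraint that singles out $\rho_x$ among all sign-definite competitors, and I would expect the argument to succeed only under extra hypotheses on the supports (for instance full rank of $\gamma_Q$ and of the $M_x$, or the degenerate case $\tau=p$), with the generic case potentially admitting a convex set of minimizers of which $\hat\gamma_{XQ}$ is the canonical representative.
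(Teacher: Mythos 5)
Your computation of the minimum value is exactly the paper's own proof: the same block decomposition of the trace norm over the classical register $X$, the same bound $\|A\|_1\ge|\Tr A|$ applied to each block, and the same achievability check at $\hat\gamma_{XQ}$. Up to that point there is nothing to add.

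On uniqueness, your suspicion is not merely reasonable; it is correct, and it exposes a genuine flaw in the paper. The paper's proof ends by asserting that equality in the blockwise bound holds ``if and only if'' $\sigma_{Q|x}=\hat\gamma_{Q|x}$ for all $x$, with no supporting argument, and your sign-definiteness characterization shows that this assertion is false whenever $\tau\neq p$. Indeed, writing $\hat\gamma_{Q|x}=\sqrt{\gamma_Q}M_x\sqrt{\gamma_Q}/p(x)$, if $\tau(x)>p(x)$ then every state of the form
\[
\sigma_{Q|x}=\frac{p(x)}{\tau(x)}\,\hat\gamma_{Q|x}+\Bigl(1-\frac{p(x)}{\tau(x)}\Bigr)\xi\;,
\]
with $\xi$ an arbitrary density operator, gives $\tau(x)\sigma_{Q|x}-p(x)\hat\gamma_{Q|x}=\bigl(\tau(x)-p(x)\bigr)\xi\ge0$ and hence attains $\|\tau(x)\sigma_{Q|x}-p(x)\hat\gamma_{Q|x}\|_1=|\tau(x)-p(x)|$. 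A concrete qubit instance: $\hat\gamma_{Q|x}=\ketbra{0}$, $\tau(x)=1/2$, $p(x)=1/4$, $\sigma_{Q|x}=\tfrac12(\ketbra{0}+\ketbra{1})$ gives block trace norm $1/4=|\tau(x)-p(x)|$. Since $\tau$ and $p$ are both probability distributions, $\tau\neq p$ forces the existence of at least one $x$ with $\tau(x)>p(x)$, so whenever $\tau\neq p$ and $\dim\sH_Q\ge2$ one can assemble such blocks into an admissible $\sigma_{XQ}\neq\hat\gamma_{XQ}$ achieving the minimum. Uniqueness therefore holds only in the degenerate case $\tau=p$, where the minimum is zero and faithfulness of the trace norm forces $\sigma_{XQ}=\gamma_{XQ}$. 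In short: the part of Theorem~\ref{th:trace-dist} you proved is all that is true; the uniqueness claim should be weakened to the sign-definiteness condition you identified, with $\hat\gamma_{XQ}$ singled out only as a canonical element of a convex set of minimizers, in contrast with the strictly convex divergences of Theorem~\ref{th:divergences}, where the minimizer genuinely is unique.
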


\begin{proof}
    Due to the common cq structure, we have
    \begin{align*}
    \left\|\sigma_{XQ}-\gamma_{XQ}\right\|_1&=\sum_{x\in\set{X}}\left\|\tau(x)\sigma_{Q|x}-p(x)\frac{\sqrt{\gamma_Q}M_x\sqrt{\gamma_Q}}{\Tr\{\gamma_Q\;M_x\}}\right\|_1\\
    &\ge \sum_{x\in\set{X}}\left|\tau(x)-p(x)\right|\;,
 \end{align*}
 where the equality in the second line holds if and only if $\sigma_{Q|x}=\frac{\sqrt{\gamma_Q}M_x\sqrt{\gamma_Q}}{\Tr\{\gamma_Q\;M_x\}}=:\hat\gamma_{Q|x}$, for all $x$.\\
\end{proof}

Theorems~\ref{th:divergences} and~\ref{th:trace-dist} reinforce the argument first proposed in~\cite{bai-2025-q-bayes-minimum-change}, which motivates the adoption of the conditional states
\begin{align}\label{eq:retro-states}
    \hat\gamma_{Q|\bar x}:=\frac{1}{\Tr\{\gamma_Q\;M_{\bar x}\}}\sqrt{\gamma_Q}M_{\bar x}\sqrt{\gamma_Q}\;,
\end{align}
as \textit{the} quantum analogue of the classical Bayesian inverse in Eq.~\eqref{eq:vanilla-bayes}. This identification had been previously anticipated in the literature, e.g.~\cite{Fuchs-2003-quantum-theory-info-something-more,leifer-2006-q-dyn-as-cond-prob,leifer2007conditional,leifer-spekkens-2013-causally-neutral,jacobs-changing-mind,Parzygnat2023axiomsretrodiction,buscemi2022observational,nagasawa-2025-general-increase}, where it was argued largely on the basis of formal analogy with Bayes’ rule rather than as the outcome of a variational principle. We therefore refer to $\hat\gamma_{Q|\bar x}$ as the \textit{quantum Bayesian inverse} associated with the measurement outcome $\bar x$. The results obtained in this semiclassical framework include the classical minimum change principle as a special case. In what follows, we study how the Bayesian inverse $\hat\gamma_{Q|\bar x}$, conditioned on a measurement outcome, can be used to retrodict information about the system state prior to the measurement.

Before concluding this section, we note that the states defined in~\eqref{eq:retro-states} coincide with the outputs of the \textit{Petz transpose map}~\cite{petz1986sufficient,petz1988sufficiency} associated with the measurement channel $\mM$ in~\eqref{eq:qc-channel} and the prior state $\gamma_Q$, when this map is applied to the classical output states $\ketbra{x}$. For a detailed discussion of this construction, its mathematical and statistical properties, and its physical interpretation, we refer the reader to Ref.~\cite{Nagasawa-2025-macrostates-ROPP}.

\section{Quantum Bayesian retrodiction from a measurement's outcomes}\label{sec:3}

We start with a few words of warning. The use of retrodictive arguments in quantum theory is conceptually delicate. We are aware that interpreting quantum measurements as providing information about prior states, in particular the interpretation of the retrodicted states $\hat\gamma_{Q|x}$, is problematic both operationally and philosophically. Nevertheless, we take a pragmatic view: we adopt the principle of minimum change as a working hypothesis and examine where it leads. Our goal is not to justify retrodictive reasoning but to study its mathematical structure and to identify the regimes where it becomes consistent.

We repeat the setup. A quantum system $Q$ has a state described by a density operator $\gamma_Q$, which, in accordance to the Bayesian narrative, we interpret as the prior state. The system undergoes a measurement represented by a POVM $\povm{M}=\{M_x:x\in\set{X}\}$, yielding an outcome $\bar x$. We now use the prior belief about the state and the new information about the observed outcome $\bar x$ to compute the corresponding retrodicted state and, through it, infer the probabilities associated with the outcomes of \textit{another measurement}. The latter in this context should be understood as a hypothetical probe used to test the informational content of the retrodicted state. It need not be actually performed, nor must it correspond to a measurement carried out after the first one (although it can be \textit{simulated} as such, see Section~\ref{sec:retro-EUR} for further details). Rather, it represents any measurement whose statistics we can compute \textit{as if} the retrodicted state described the system’s preparation prior to the first measurement. In this way, it serves as a diagnostic tool: by evaluating the probabilities associated with this secondary measurement, we assess how well the retrodicted state encapsulates the information that the original outcome $\bar x$ conveys about the system’s earlier condition. Thus, the ``other measurement'' provides an operational interpretation of $\hat\gamma_{Q|\bar x}$, as a state that reproduces, through Born’s rule, all rational expectations for future or counterfactual observations consistent with the retrodictive update.

If the other measurement is represented by a POVM $\povm{N}:=\{N_y:y\in\set{Y}\}$, the analogue of Eq.~\eqref{eq:classical-retrodiction} takes the form
\begin{align*}
\Pr\{y|\bar x\}&=\Tr\{N_y\;\hat\gamma_{Q|\bar x}\}\\
&=\frac{1}{\Tr\{M_{\bar x}\;\gamma_Q\}}\Tr\{N_y\;\sqrt{\gamma_Q} M_{\bar x} \sqrt{\gamma_Q}\}\;.
\end{align*}
Accordingly, we define the \textit{joint} retrodictive probability of assigning outcome $y$ to $N$ from the information obtained through outcome $x$ of $M$ as
\begin{align}\label{eq:retro-joint-pd}
\Pr\{y\leftarrow x\}:=\Tr\{N_y\;\sqrt{\gamma_Q} M_x \sqrt{\gamma_Q}\}\;.
\end{align}
It is immediate to verify that marginalizing the joint probability $\Pr\{y\leftarrow x\}$ recovers the standard Born rule for both POVMs:
\begin{align}\label{eq:marginals-OK}
\sum_{y\in\set{Y}}\Pr\{y\leftarrow x\}=\Tr\{M_x\;\gamma_Q\}\;,\qquad \sum_{x\in\set{X}}\Pr\{y\leftarrow x\}=\Tr\{N_y\;\gamma_Q\}\;.
\end{align}
The retrodictive joint probability $\Pr\{y\leftarrow x\}$ thus defines a valid joint distribution for any state $\gamma_Q$ and any pair of POVMs $\povm{M}$ and $\povm{N}$, even when these are not jointly measurable~\cite{Buscemi2023-instrument-incompatibility}. This is possible because, in general, $\Pr\{y\leftarrow x\}$ is not a linear functional of the state $\gamma_Q$, although it remains linear in the POVMs.

We also observe that, as a consequence of the cyclic property of the trace, the retrodictive joint probability $\Pr\{y\leftarrow x\}$ is \textit{symmetric} with respect to the two POVMs, so that, in fact
\[
\Pr\{y\leftarrow x\}=\Pr\{x\leftarrow y\}\;.
\]
In other words, performing the measurement $\povm{M}$ first and using its outcome to retrodict the statistics of $\povm{N}$ yields the same joint distribution as performing $\povm{N}$ first and using its outcome to retrodict those of $\povm{M}$. This symmetry contrasts sharply with the forward-in-time, or predictive, case, where the joint statistics depend on the specific order of the measurements. In the predictive setting, the measurement of a POVM $\povm{M}=\{M_x:x\in\set{X}\}$ must be described by a \textit{quantum instrument} $\{\mathcal{J}_x^{\povm{M}}:x\in\set{X}\}$, i.e., a collection of completely positive trace-nonincreasing maps satisfying $\Tr\{\mathcal{J}_x^{\povm{M}}(\rho)\}=\Tr\{M_x\;\rho\}$ for all density operators $\rho$. The joint probability for two sequential measurements associated with the POVMs $\povm{M}$ and $\povm{N}$, performed in this order, is then given by
\[
\Pr\{x\to y\} = \Tr\{N_y\,\mathcal{J}_x^{\povm{M}}(\gamma_Q)\}\;,
\]
which in general depends on the particular instrument chosen for the measurement performed first.

\section{Mutual retrodictability}\label{sec:4}

With a joint probability at hand, it is natural to define a \textit{mutual retrodictability} as follows:

\begin{definition} \label{def:mutual_retrodictability}
    Given a state $\gamma_Q$ and two finite POVMs $\povm{M}=\{M_x:x\in\set{X}\}$ and $\povm{N}=\{N_y:y\in\set{Y}\}$ on $Q$, the {\em mutual retrodictability} of $\povm{M}$ and $\povm{N}$ with respect to $\gamma_Q$ is defined as
    \[
    R(\povm{M};\povm{N})_\gamma:=I(X;Y)\;,
    \]
    where the right-hand side is the mutual information computed for the joint probability distribution $\Pr\{y\leftarrow x\}$ given in~\eqref{eq:retro-joint-pd}.
\end{definition}

The term ``mutual retrodictability'' highlights that the relationship between two measurements, $\povm{M}$ and $\povm{N}$, is not merely statistical but inferential. Through the principle of minimum change, the outcome of one measurement provides the most conservative retrodictive estimate of what the other measurement's outcome statistics must have been, had it probed the same underlying preparation. Accordingly, the mutual information of the retrodictive joint distribution $\Pr\{y \leftarrow x\}$ quantifies how strongly these backward inferences are correlated, measuring the degree to which the two POVMs can infer each other's outcome distributions under the minimum-change retrodictive update.

\begin{theorem}\label{th:mutual-bound-2}
    For any state $\gamma_Q$ and any pair of POVMs $\povm{M} = \{M_x : x \in \set{X}\}$ and $\povm{N} = \{N_y : y \in \set{Y}\}$ on $Q$, the mutual retrodictability satisfies
    \begin{align}\label{eq:mutual-bound}
        0 \le R(\povm{M}; \povm{N})_\gamma \le H(\gamma_Q)\;.
    \end{align}
\end{theorem}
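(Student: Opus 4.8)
The plan is to treat the two inequalities separately. The lower bound $R(\povm{M};\povm{N})_\gamma \ge 0$ is immediate, since by Definition~\ref{def:mutual_retrodictability} the quantity is a genuine (classical) mutual information, and mutual information is always nonnegative. All the work lies in the upper bound.

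First I would realize the retrodictive joint distribution~\eqref{eq:retro-joint-pd} as the outcome of a physical two-step process on the classical--quantum state $\gamma_{XQ}=\sum_x\ketbra{x}_X\otimes\sqrt{\gamma_Q}M_x\sqrt{\gamma_Q}$ already introduced in Sec.~\ref{sec:2}. Applying to its $Q$-factor the measurement channel $\mathcal{N}(\cdot):=\sum_y\Tr\{N_y\,\cdot\}\ketbra{y}_Y$ produces the fully classical state $\sum_{x,y}\Pr\{y\leftarrow x\}\,\ketbra{x}_X\otimes\ketbra{y}_Y$, whose quantum mutual information coincides with the classical $I(X;Y)=R(\povm{M};\povm{N})_\gamma$. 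Because $\mathcal{N}$ acts only on the $Q$ factor, the data-processing inequality for quantum mutual information yields $R(\povm{M};\povm{N})_\gamma\le I(X;Q)_{\gamma_{XQ}}$.

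The second step is to evaluate $I(X;Q)_{\gamma_{XQ}}$ and bound it by $H(\gamma_Q)$. Since $\gamma_{XQ}$ is classical--quantum with conditional states $\hat\gamma_{Q|x}=\sqrt{\gamma_Q}M_x\sqrt{\gamma_Q}/p(x)$ and $Q$-marginal $\Tr_X\{\gamma_{XQ}\}=\gamma_Q$, its mutual information equals the Holevo quantity $I(X;Q)_{\gamma_{XQ}}=H(\gamma_Q)-\sum_x p(x)\,H(\hat\gamma_{Q|x})$, where the averaged conditional state reproduces $\gamma_Q$ by the completeness relation $\sum_x M_x=\openone_Q$. As each conditional von Neumann entropy satisfies $H(\hat\gamma_{Q|x})\ge 0$, I would conclude $I(X;Q)_{\gamma_{XQ}}\le H(\gamma_Q)$, which chains with the previous inequality to deliver $R(\povm{M};\povm{N})_\gamma\le H(\gamma_Q)$.

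I do not expect a serious obstacle: the argument is a clean application of data processing followed by the Holevo bound. The only genuinely conceptual move is the first one, namely recognizing that the symmetric retrodictive distribution is precisely the $\povm{N}$-measurement statistics of the cq state $\gamma_{XQ}$; once this is seen, everything reduces to standard entropic inequalities. As a consistency check on the symmetry of the statement, note that by $\Pr\{y\leftarrow x\}=\Pr\{x\leftarrow y\}$ one could equally measure the analogous state $\gamma_{YQ}=\sum_y\ketbra{y}_Y\otimes\sqrt{\gamma_Q}N_y\sqrt{\gamma_Q}$ with $\povm{M}$, obtaining $R(\povm{M};\povm{N})_\gamma\le I(Y;Q)_{\gamma_{YQ}}\le H(\gamma_Q)$; both routes yield the same prior-state bound, in agreement with the symmetric roles of $\povm{M}$ and $\povm{N}$.
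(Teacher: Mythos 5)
Your proof is correct, and its backbone is the same as the paper's: one application of the data-processing inequality for quantum mutual information, followed by the elementary bound $I \le H(\gamma_Q)$ for a classical--quantum state whose $Q$-marginal is $\gamma_Q$. The difference is in how the joint distribution $\Pr\{y\leftarrow x\}$ gets embedded into a quantum state before data processing is invoked. The paper first lifts everything to the canonical purification $\ket{\Psi_\gamma}$ of $\gamma_Q$ and uses the transpose identity $\Pr\{y\leftarrow x\}=\Tr\{N_y\otimes M_x^T\,\ketbra{\Psi_\gamma}\}$, so that the retrodictive distribution appears as the statistics of local measurements $\mathcal{N}\otimes\mathcal{M}^T$ on the two halves of the purification; data processing then strips off the $\mathcal{M}^T$ factor. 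You instead work directly with the cq state $\gamma_{XQ}=\sum_x\ketbra{x}_X\otimes\sqrt{\gamma_Q}M_x\sqrt{\gamma_Q}$ already introduced in Sec.~\ref{sec:2}, note that measuring its $Q$ part with $\povm{N}$ reproduces exactly $\Pr\{y\leftarrow x\}$, and data-process away that single measurement. This avoids both the purification and the transpose trick, and your last step is also more explicit than the paper's: you identify $I(X;Q)_{\gamma_{XQ}}$ with the Holevo quantity $H(\gamma_Q)-\sum_x p(x)\,H(\hat\gamma_{Q|x})$ (using $\sum_x\sqrt{\gamma_Q}M_x\sqrt{\gamma_Q}=\gamma_Q$) and invoke nonnegativity of the conditional entropies, where the paper simply cites the cq bound. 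The two routes are mirror images of one another --- the paper's intermediate state $(\mathcal{N}\otimes\mathcal{I})(\ketbra{\Psi_\gamma})$ is, up to a transpose in the eigenbasis of $\gamma_Q$, your $\gamma_{YQ}$ --- but yours is the more economical derivation, and it isolates a necessary condition for saturating the upper bound, namely that every retrodicted state $\hat\gamma_{Q|x}$ with $p(x)>0$ be pure.
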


\begin{remark}
    The theorem shows that the mutual retrodictability of any two measurements is upper-bounded by a term that depends only on the purity of the prior state. In particular, if the prior state $\gamma_Q$ is pure, then $H(\gamma_Q) = 0$ and the mutual retrodictability vanishes identically for all pairs of POVMs. In this case, the system is already in a state of complete certainty, leaving no room for non-trivial retrodictive inference. A pure prior state thus acts as an \textit{inferential firewall}: it blocks any attempt to draw further retrodictive conclusions, just as, in the classical setting, no amount of new information can alter a delta-distributed prior.
\end{remark}

\begin{proof}
The mutual retrodictability is defined as the mutual information of the joint distribution
\[
\Pr\{y \leftarrow x\}=\Tr\{N_y\sqrt{\gamma_Q}M_x\sqrt{\gamma_Q}\}\;,
\]
so non-negativity follows immediately. It remains to prove the upper bound.

Choose an orthonormal basis that diagonalizes $\gamma_Q$, writing $\gamma_Q=\sum_i g_i\ketbra{\psi_i}$. Using this basis, define the canonical purification
\[
\ket{\Psi_\gamma}:=\sum_i\sqrt{g_i}\ket{\psi_i}\otimes\ket{\psi_i}\;.
\]
By construction,
\[
\Tr_1\{\ketbra{\Psi_\gamma}\}=\Tr_2\{\ketbra{\Psi_\gamma}\}=\gamma\;.
\]
Moreover, for all operators $X$ and $Y$,
\[
\Tr\{X\otimes Y\ \ketbra{\Psi_\gamma}\}
=\Tr\!\left\{X\sqrt{\gamma}\,Y^T\sqrt{\gamma}\right\},
\]
where the transpose is taken in the eigenbasis of $\gamma$. Hence,
\[
\Pr\{y\leftarrow x\}
=\Tr\{N_y\otimes M_x^T\ \ketbra{\Psi_\gamma}\}\;.
\]
Let $\mathcal{N}$ and $\mathcal{M}^T$ denote the qc channels, as defined in Eq.~\eqref{eq:qc-channel}, associated with $\povm{N}$ and $\povm{M}^T:=\{M_x^T:x\in\set{X}\}$, respectively. The above implies
\[
R(\povm{M};\povm{N})_\gamma
=I(X;Y)_{(\mathcal{N}\otimes\mathcal{M}^T)(\ketbra{\Psi_\gamma})}\;.
\]

We now apply the data–processing inequality for the quantum mutual information:
\begin{align*}
R(\povm{M};\povm{N})_\gamma
&= I(Y;X)_{(\mathcal{N}\otimes\mathcal{M}^T)(\ketbra{\Psi_\gamma})}\\
&\le I(Y;Q)_{(\mathcal{N}\otimes\mathcal{I})(\ketbra{\Psi_\gamma})}\;.
\end{align*}
Since $(\mathcal{N}\otimes\mathcal{I})(\ketbra{\Psi_\gamma})$ is a cq state with marginal $\gamma_Q$, the quantum mutual information satisfies
\[
I(Y;Q)_{(\mathcal{N}\otimes\mathcal{I})(\ketbra{\Psi_\gamma})}
\le H(\gamma_Q)\;.
\]
Combining inequalities yields the claimed bound
\[
R(\povm{M};\povm{N})_\gamma \le H(\gamma_Q)\;.
\]
\end{proof}

\begin{remark}\sloppy
The bounds in~\eqref{eq:mutual-bound} are tight.
It is straightforward to construct scenarios where $R(\povm{M}; \povm{N})_\gamma = 0$ or $R(\povm{M}; \povm{N})_\gamma = H(\gamma_Q)$. The former occurs, as noted above, when, e.g., $\gamma_Q$ is a pure state; in this case, the upper bound also vanishes, so the lower and upper bounds coincide. A slightly less trivial example arises when, for instance, $\povm{M}$ is a projective measurement on the eigenstates of $\gamma_Q$, while $\povm{N}$ is mutually unbiased. Conversely, a situation where $R(\povm{M}; \povm{N})_\gamma = H(\gamma_Q) > 0$ may occur when $\povm{M} = \povm{N}$, and both are projective measurements on the eigenstates of $\gamma_Q$.
\end{remark}

These observations suggest the possibility that mutual retrodictability could serve as a \textit{state-dependent indicator of compatibility} between two POVMs. When $R(\povm{M};\povm{N})_\gamma$ is large, the measurements exhibit significant inferential overlap with respect to $\gamma_Q$; when it is small, they appear to probe largely independent aspects of the state. We leave this point open for future investigation.

These examples also suggest that large mutual retrodictability corresponds to regimes in which the overall triplet, consisting of the state $\gamma_Q$ together with the two POVMs $\povm{M}$ and $\povm{N}$, behaves in a quasi-classical manner. In such cases, the measurement statistics admit an interpretation in terms of classical uncertainty about an underlying ontic configuration, rather than genuine quantum indeterminacy. Put differently, when $R(\povm{M};\povm{N})_\gamma$ approaches its upper bound, the retrodictive reconstruction of measurement outcomes becomes nearly indistinguishable from standard Bayesian inference over classical variables. The notion of ``retrodiction'' thus regains its uncontroversial classical meaning: it no longer reflects an inference about noncommuting observables, but merely the update of classical ignorance concerning a common underlying cause consistent with both measurements.

\section{Retrodictive entropic uncertainty relations}\label{sec:retro-EUR}

Before proceeding, we emphasize that the results in this section do not rely on any interpretive use of retrodiction. The quantity $\Pr\{y \leftarrow x\}$ introduced above can be regarded simply as a symmetric joint probability constructed from two POVMs and a state, without invoking any notion of temporal order or inference about past events. The uncertainty relations that follow are purely mathematical consequences of this symmetric construction. They hold regardless of the interpretation given to $\Pr\{y \leftarrow x\}$.

As shown in Eq.~\eqref{eq:marginals-OK}, the marginals of the symmetric joint probability $\Pr\{y \leftarrow x\}$ coincide with the standard outcome probabilities of both POVMs. This allows us to identify $H(X)$ and $H(Y)$, computed from $\Pr\{y \leftarrow x\}$, with $H(\povm{M})_\gamma$ and $H(\povm{N})_\gamma$, respectively. Since the mutual retrodictability is non-negative, we immediately obtain the lower bound
\[
H(\povm{M})_\gamma + H(\povm{N})_\gamma \ge H(XY)\;,
\]
where $H(XY)$ is the entropy computed from $\Pr\{y \leftarrow x\}$. This yields our first retrodictive entropic uncertainty relation:

\begin{theorem}
    For any state $\gamma_Q$ and any pair of POVMs $\povm{M} = \{M_x : x \in \set{X}\}$ and $\povm{N} = \{N_y : y \in \set{Y}\}$ on $Q$, the following bound holds:
    \begin{align}\label{eq:REUR-1}
        H(\povm{M})_\gamma + H(\povm{N})_\gamma \ge -\max_{x,y} \log \|\sqrt{N_y}\sqrt{\gamma_Q}\sqrt{M_x}\|^2_{2}\;.
    \end{align}
\end{theorem}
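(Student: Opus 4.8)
The plan is to reduce the claimed relation to an elementary entropy inequality applied to the symmetric joint distribution $\Pr\{y\leftarrow x\}$, since the heavy lifting has already been done. From the nonnegativity of the mutual retrodictability, $R(\povm{M};\povm{N})_\gamma = I(X;Y)\ge 0$, the preceding discussion already yields the subadditivity bound
\[
H(\povm{M})_\gamma + H(\povm{N})_\gamma \ge H(XY)\;,
\]
where $H(XY)$ denotes the Shannon entropy of the joint distribution $P(x,y):=\Pr\{y\leftarrow x\}$, whose marginals are the Born-rule distributions of $\povm{M}$ and $\povm{N}$ by~\eqref{eq:marginals-OK}. It therefore suffices to lower-bound $H(XY)$ by the right-hand side of~\eqref{eq:REUR-1}.

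First I would invoke the standard fact that the Shannon entropy dominates the min-entropy: for any probability distribution, each term satisfies $-\log P(x,y)\ge -\log\max_{x',y'}P(x',y')$, so that
\[
H(XY) = \sum_{x,y} P(x,y)\bigl(-\log P(x,y)\bigr) \ge -\log\max_{x,y}\Pr\{y\leftarrow x\}\;.
\]
The only substantive step is then to recognize that each joint probability is \emph{exactly} a squared Hilbert--Schmidt norm. Setting $A_{x,y}:=\sqrt{N_y}\sqrt{\gamma_Q}\sqrt{M_x}$ and using $\|A\|_2^2=\Tr\{A^\dagger A\}$ together with the cyclicity of the trace, a short computation gives
\[
\|A_{x,y}\|_2^2 = \Tr\{\sqrt{M_x}\sqrt{\gamma_Q}\,N_y\,\sqrt{\gamma_Q}\sqrt{M_x}\} = \Tr\{N_y\,\sqrt{\gamma_Q}\,M_x\,\sqrt{\gamma_Q}\} = \Pr\{y\leftarrow x\}\;.
\]
Substituting this identity, and using that $-\log$ is decreasing so that $-\log\max_{x,y}P = -\max_{x,y}\log P$, produces precisely~\eqref{eq:REUR-1}.

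There is essentially no obstacle here beyond spotting the Hilbert--Schmidt identity: once one observes that $\Pr\{y\leftarrow x\}$ is literally $\|\sqrt{N_y}\sqrt{\gamma_Q}\sqrt{M_x}\|_2^2$, the relation is an immediate consequence of the min-entropy lower bound on Shannon entropy combined with the already-established subadditivity coming from $R(\povm{M};\povm{N})_\gamma\ge 0$. I would also remark, in passing, that the bound is genuinely a statement about the symmetric construction rather than about retrodiction as such, consistent with the framing at the start of this section: the inequality holds for any state and any pair of POVMs, jointly measurable or not.
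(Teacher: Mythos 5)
Your proposal is correct and follows essentially the same route as the paper's own proof: both use the bound $H(\povm{M})_\gamma + H(\povm{N})_\gamma \ge H(XY)$ coming from $R(\povm{M};\povm{N})_\gamma\ge 0$, lower-bound $H(XY)$ by its largest probability term, and identify $\Pr\{y\leftarrow x\}=\|\sqrt{N_y}\sqrt{\gamma_Q}\sqrt{M_x}\|_2^2$ via cyclicity of the trace. Your write-up merely makes the Hilbert--Schmidt identity and the min-entropy step more explicit than the paper does.
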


\begin{proof}
    The proof is immediate: the joint probability~\eqref{eq:retro-joint-pd} is equivalent to $||\sqrt{N_y}\sqrt{\gamma_Q}\sqrt{M_x}||_2^2,$ 
    then from the inequality $H(\povm{M})_{\gamma}+H(\povm{N})_{\gamma}\ge H(XY)$ we have:
    \begin{align*}
        H(\povm{M})_{\gamma}+H(\povm{N})_{\gamma}&\ge H(XY) \\
        &= -\sum_{x,y}\Pr(x,y)\log\|\sqrt{N_y}\sqrt{\gamma_Q}\sqrt{M_x}\|_2^2 \\
        & \ge -\max_{x,y}\log\|\sqrt{N_y}\sqrt{\gamma_Q}\sqrt{M_x}\|_2^2.
    \end{align*}
\end{proof}

A second retrodictive entropic uncertainty relation follows from the observation that, while in Section~\ref{sec:3} we noted that the predictive joint probability distribution $\Pr\{x \to y\}$ generally depends on the choice of instrument for the first measurement and is therefore not symmetric, there exists a \textit{particular} instrument that reproduces the retrodictive joint statistics $\Pr\{y \leftarrow x\}$. This instrument can be constructed by starting from the square-root instrument and applying a unitary operator, arising from the polar decomposition, that exchanges the two square roots, namely $\sqrt{M_x}\sqrt{\gamma_Q} \to \sqrt{\gamma_Q}\sqrt{M_x}$. Explicitly, we define
\begin{align}\label{eq:polar-dec}
\mathcal{J}^{\povm{M}}_x(\gamma_Q) := U_{\gamma,x}\sqrt{M_x}\gamma_Q\sqrt{M_x}U_{\gamma,x}^\dag = \sqrt{\gamma_Q}M_x\sqrt{\gamma_Q}\;,
\end{align}
for all $x \in \set{X}$. The unitary operators $U_{\gamma,x}$ depend on the state $\gamma_Q$. Consequently, this construction works for a fixed state, but the map $\gamma_Q\mapsto\sqrt{\gamma_Q}M_x\sqrt{\gamma_Q}$ is not linear in $\gamma_Q$, and no single instrument can reproduce it for all quantum states.

Concatenating this instrument with the square-root instrument for $\povm{N}$ and adding two classical registers for the measurement outcomes, we obtain a quantum channel with hybrid quantum–classical output:
\begin{align}
    \tilde{\mathcal{J}}^{\povm{M}\to\povm{N}}(\cdot) := \sum_{y,x} \sqrt{N_y} U_{\gamma,x} \sqrt{M_x} (\cdot) \sqrt{M_x} U_{\gamma,x}^\dag \sqrt{N_y} \otimes \ketbra{x}_X \otimes \ketbra{y}_Y\;,
\end{align}
which, by construction, when applied to $\gamma_Q$, satisfies
\begin{align}
    \tilde{\mathcal{J}}^{\povm{M}\to\povm{N}}(\gamma_Q) = \sum_{y,x} \sqrt{N_y}\sqrt{\gamma_Q}M_x\sqrt{\gamma_Q}\sqrt{N_y} \otimes \ketbra{x}_X \otimes \ketbra{y}_Y\;.
\end{align}
Since the joint probability distribution we aim to reproduce is symmetric under the exchange $\povm{M}\leftrightarrow\povm{N}$, we can repeat the same reasoning with the roles of the two POVMs reversed, taking $\povm{N}$ as the first measurement instead of $\povm{M}$. This yields
\begin{align}
    \tilde{\mathcal{J}}^{\povm{N}\to\povm{M}}(\cdot) := \sum_{y,x} \sqrt{M_x} V_{\gamma,y} \sqrt{N_y} (\cdot) \sqrt{N_y} V_{\gamma,y}^\dag \sqrt{M_x} \otimes \ketbra{x}_X \otimes \ketbra{y}_Y\;,
\end{align}
and
\begin{align}
   \tilde{\mathcal{J}}^{\povm{N}\to\povm{M}}(\gamma_Q) = \sum_{y,x} \sqrt{M_x}\sqrt{\gamma_Q}N_y\sqrt{\gamma_Q}\sqrt{M_x} \otimes \ketbra{x}_X \otimes \ketbra{y}_Y\;. 
\end{align}

Using the above, we can establish the following alternative entropic uncertainty relation:
\begin{theorem}
Let $\gamma_Q$ be a state on $Q$ and let $\povm{M}=\{M_x:x\in\set{X}\}$ and $\povm{N}=\{N_y:y\in\set{Y}\}$ be POVMs on $Q$. Assume that at least one among $\gamma_Q$, $\povm{M}$, or $\povm{N}$ is rank-one. Then,
\begin{equation}\label{eq:REUR-2}
  H(\povm{M})_\gamma + H(\povm{N})_\gamma \;\ge\; H(\gamma_Q) + \max\left\{D\bigl(\gamma_Q \| \tilde\gamma_Q\bigr),D\bigl(\gamma_Q \| \tilde\eta_Q\bigr)\right\},
\end{equation}
where the operators $\tilde\gamma_Q$ and $\tilde\eta_Q$ are defined by
\begin{align}\label{eq:gamma-tilda}
        \tilde\gamma_Q &:= \left[(\tilde{\mathcal{J}}^{\povm{M}\to\povm{N}})^\dag\circ\tilde{\mathcal{J}}^{\povm{M}\to\povm{N}}\right](\gamma_Q)\\
        &= \sum_{x,y} \sqrt{M_x} U_{\gamma,x}^\dag N_y \sqrt{\gamma_Q} M_x \sqrt{\gamma_Q} N_y U_{\gamma,x} \sqrt{M_x}\;,\nonumber
    \end{align}
and
\begin{align}\label{eq:eta-tilda}
        \tilde\eta_Q &:= \left[(\tilde{\mathcal{J}}^{\povm{N}\to\povm{M}})^\dag\circ\tilde{\mathcal{J}}^{\povm{N}\to\povm{M}}\right](\gamma_Q)\\
        &= \sum_{x,y} \sqrt{N_y} V_{\gamma,y}^\dag M_x \sqrt{\gamma_Q} N_y \sqrt{\gamma_Q} M_x V_{\gamma,y} \sqrt{N_y}\;,\nonumber
    \end{align}
respectively.
Note that, in general, $\Tr{\tilde\gamma_Q},\Tr{\tilde\eta_Q}\le 1$.
\end{theorem}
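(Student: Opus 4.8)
The plan is to read the left-hand side as a Shannon entropy sum for the symmetric law $\Pr\{y\leftarrow x\}$ and to funnel it through the von Neumann entropy of the channel output $\Omega_{QXY}:=\tilde{\mathcal{J}}^{\povm{M}\to\povm{N}}(\gamma_Q)$. By the marginal identities in~\eqref{eq:marginals-OK} I would identify $H(\povm{M})_\gamma=H(X)$ and $H(\povm{N})_\gamma=H(Y)$, both computed from $\Pr\{y\leftarrow x\}$, and then invoke subadditivity (equivalently, non-negativity of the mutual retrodictability) to write $H(\povm{M})_\gamma+H(\povm{N})_\gamma=H(X)+H(Y)\ge H(XY)$. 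The whole problem thus reduces to the single inequality $H(XY)\ge H(\gamma_Q)+D(\gamma_Q\|\tilde\gamma_Q)$.

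The next step is to connect $H(XY)$ with the output entropy. Since $\Omega_{QXY}$ is block-diagonal in the classical registers, one has $H(\Omega_{QXY})=H(XY)+\sum_{x,y}\Pr\{y\leftarrow x\}\,H(\rho_{xy})$, where $\rho_{xy}$ is the normalized conditional state proportional to $\sqrt{N_y}\sqrt{\gamma_Q}M_x\sqrt{\gamma_Q}\sqrt{N_y}$. This is exactly where the rank-one hypothesis enters: if any one of $\gamma_Q$, $M_x$, or $N_y$ is rank-one, then each block $\sqrt{N_y}\sqrt{\gamma_Q}M_x\sqrt{\gamma_Q}\sqrt{N_y}$ is rank-one, so every $\rho_{xy}$ is pure, $H(\rho_{xy})=0$, and therefore $H(XY)=H(\Omega_{QXY})=H\bigl(\tilde{\mathcal{J}}^{\povm{M}\to\povm{N}}(\gamma_Q)\bigr)$. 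Without this hypothesis the conditional entropies survive with the wrong sign and the chain breaks, which is precisely why rank-one is assumed.

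The core estimate is then an entropy-gain bound for the channel $\mathcal{E}:=\tilde{\mathcal{J}}^{\povm{M}\to\povm{N}}$. I would first check that $\mathcal{E}$ is CPTP: its Kraus operators $K_{xy}=\sqrt{N_y}\,U_{\gamma,x}\sqrt{M_x}\otimes\ket{x}_X\otimes\ket{y}_Y$ satisfy $\sum_{x,y}K_{xy}^\dag K_{xy}=\sum_x\sqrt{M_x}\,U_{\gamma,x}^\dag\bigl(\sum_y N_y\bigr)U_{\gamma,x}\sqrt{M_x}=\sum_x M_x=\openone$, so the adjoint $\mathcal{E}^\dag$ is unital and completely positive. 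Because $\log$ is operator concave, the Hansen--Pedersen operator Jensen inequality applied to the unital positive map $\mathcal{E}^\dag$ gives $\mathcal{E}^\dag(\log\mathcal{E}(\gamma_Q))\le\log\mathcal{E}^\dag(\mathcal{E}(\gamma_Q))=\log\tilde\gamma_Q$. Taking $\Tr\{\gamma_Q\,\cdot\,\}$ of both sides and using the defining adjoint relation $\Tr\{\gamma_Q\,\mathcal{E}^\dag(A)\}=\Tr\{\mathcal{E}(\gamma_Q)\,A\}$ with $A=\log\mathcal{E}(\gamma_Q)$ yields $-H(\mathcal{E}(\gamma_Q))\le\Tr\{\gamma_Q\log\tilde\gamma_Q\}$, i.e. $H(\mathcal{E}(\gamma_Q))\ge H(\gamma_Q)+D(\gamma_Q\|\tilde\gamma_Q)$. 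Chaining the three steps gives $H(\povm{M})_\gamma+H(\povm{N})_\gamma\ge H(\gamma_Q)+D(\gamma_Q\|\tilde\gamma_Q)$, and repeating the entire argument with the roles of $\povm{M}$ and $\povm{N}$ exchanged (using $\tilde{\mathcal{J}}^{\povm{N}\to\povm{M}}$ and $\tilde\eta_Q$) produces the companion bound; taking the maximum delivers~\eqref{eq:REUR-2}.

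The main obstacle I anticipate is not the algebra but the operator-analytic and support technicalities around the logarithms. One must ensure that $\log\mathcal{E}(\gamma_Q)$ and $\log\tilde\gamma_Q$ are interpreted on the appropriate supports so that the operator Jensen step is rigorous for the possibly rank-deficient $\mathcal{E}(\gamma_Q)$, and that $D(\gamma_Q\|\tilde\gamma_Q)$ is finite, which requires $\operatorname{supp}\gamma_Q\subseteq\operatorname{supp}\tilde\gamma_Q$; a short argument (or a limiting/regularization procedure perturbing $\gamma_Q$ toward full rank) should close this gap. A secondary point worth verifying carefully is that the rank-one argument in the second step covers all three allowed cases uniformly, since the rank-one factor can sit in the prior or in either POVM.
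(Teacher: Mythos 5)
Your proposal is correct and takes essentially the same route as the paper's proof: reduce the left-hand side to $H(XY)$ by subadditivity of Shannon entropy, use the rank-one hypothesis to identify $H(XY)$ with the output entropy $H\bigl(\tilde{\mathcal{J}}^{\povm{M}\to\povm{N}}(\gamma_Q)\bigr)$ (the paper phrases this as the Groenewold--Ozawa information gain equaling $H(\gamma_Q)$, citing a reference, which is exactly your purity-of-blocks observation), then apply the adjoint/operator-Jensen entropy-gain bound $H(\mathcal{E}(\gamma_Q))-H(\gamma_Q)\ge D(\gamma_Q\|(\mathcal{E}^\dag\circ\mathcal{E})(\gamma_Q))$, and finally symmetrize in $\povm{M}\leftrightarrow\povm{N}$. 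The only differences are presentational: your direct check that each block $\sqrt{N_y}\sqrt{\gamma_Q}M_x\sqrt{\gamma_Q}\sqrt{N_y}$ is rank-one in all three cases is a self-contained substitute for the paper's citation, and your support/regularization caveat addresses a technicality the paper leaves implicit.
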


\begin{proof}
    Since the same exact reasoning can be repeated substituting $\tilde{\mathcal{J}}^{\povm{M}\to\povm{N}}$ with $\tilde{\mathcal{J}}^{\povm{N}\to\povm{M}}$, we prove the lower bound~\eqref{eq:REUR-2} only for $\tilde\gamma_Q$.
    
    The proof follows the idea in~\cite{buscemi-das-wilde-2016approximate}. For a state $\gamma_Q$ and $\E:\mathcal{L}(\mathcal{H})\rightarrow \mathcal{L}(\mathcal{H}')$ a positive, trace-preserving and sub-unital map, we have:
    \begin{align*}
        H(\E(\gamma_Q))-H(\gamma_Q) &= \Tr\{\gamma_Q \log \gamma_Q\} - \Tr\{\E(\gamma_Q)\log\E(\gamma_Q)\} \\
        &= \Tr\{\gamma_Q \log \gamma_Q\} - \Tr\{\gamma_Q \E^{\dag}(\log\E(\gamma_Q))\} \\
        &\ge \Tr\{\gamma_Q \log \gamma_Q\} - \Tr\{\gamma_Q \log(\E^{\dag}\circ\E)(\gamma_Q)\} \\
        &= D(\gamma_Q||(\E^{\dag}\circ\E)(\gamma_Q)),\numberthis
    \end{align*}
    where the second equality is from the definition of adjoint and the inequality follows from operator concavity of the logarithm and the operator Jensen inequality for positive  unital maps~\cite{choi1974schwarz}. The sub-unitality is required for the non-negativity of $D(\gamma_Q||(\E^{\dag}\circ\E)(\gamma_Q)).$ 
    
    Recall that the sequential measurement channel $\tilde{\mathcal{J}}^{\povm{M}\to\povm{N}}$ is CPTP, sub-unital, and 
    \begin{align*}
      \tilde{\mathcal{J}}^{\povm{M}\to\povm{N}}(\gamma_Q) &= \sum_{y,x} \Pr(x,y)\frac{\sqrt{N_y}\sqrt{\gamma_Q}M_x\sqrt{\gamma_Q}\sqrt{N_y}}{\Pr(x,y)} \otimes \ketbra{x}_X \otimes \ketbra{y}_Y\\
      &=:\omega_{QXY}
    \end{align*}
    we thus have:
    \begin{align*}
       H(\tilde{\mathcal{J}}^{\povm{M}\to\povm{N}}(\gamma_Q)) - H(\gamma_Q) &= H(QXY)_\omega - H(\gamma_Q) \\
       &= H(XY)_{\omega} + \sum_{x,y}\Pr(x,y)\;H\left(\frac{\tilde{\mathcal{J}}^{\povm{M}\to\povm{N}}_{x,y}(\gamma_Q)}{\Pr(x,y)}\right) - H(\gamma_Q) \\
       &= H(XY)_{\omega} - I_{GO}(\gamma_Q;\tilde{\mathcal{J}}^{\povm{M}\to\povm{N}}) \\
       &\ge D\left(\gamma_Q\left\|\left[(\tilde{\mathcal{J}}^{\povm{M}\to\povm{N}})^\dag\circ\tilde{\mathcal{J}}^{\povm{M}\to\povm{N}}\right](\gamma_Q)\right.\right) \\
       & = D(\gamma_Q\|\tilde{\gamma}_Q),
    \end{align*}
    where
    \begin{equation*}
I_{GO}(\gamma_Q;\tilde{\mathcal{J}}^{\povm{M}\to\povm{N}})=H(\gamma_Q)-\sum_{x,y}\Pr(x,y)\;H\left(\frac{\tilde{\mathcal{J}}^{\povm{M}\to\povm{N}}_{x,y}(\gamma_Q)}{\Pr(x,y)}\right)
    \end{equation*}
   is the Groenewold--Ozawa information gain~\cite{groenewold1971problem,ozawa1986information}.
    
    Note that $(\tilde{\mathcal{J}}^{\povm{M}\to\povm{N}})^\dag$ is trace-non-increasing, so $\Tr{\tilde\gamma_Q}\le 1$. Since for Umegaki relative entropy, we have $B \leq B' \Rightarrow D(A||B) \geq D(A||B')$, the use of $\tilde{\gamma}_Q$ actually makes the bound tighter.
    
    Assume now at least one among $\gamma_Q,\, \povm{M},\,\text{or}\, \povm{N}$ is rank-one, in this case, $I_{GO}(\gamma_Q;\tilde{\mathcal{J}}^{\povm{M}\to\povm{N}}) = H(\gamma_Q)$~\cite{buscemi2008global}. We then have:
    \begin{align*}
            H(\povm{M})_\gamma + H(\povm{N})_\gamma &\geq H(XY)_{\sigma} \\
            &\geq H(\gamma_Q)+D(\gamma_Q||\tilde{\gamma}_Q)\;.
    \end{align*}
\end{proof}

\section{Numerical Benchmark Codes and Results}\label{sec:numericalresult}

In this section we present the benchmark code together with the corresponding numerical results. The code repository is available at \url{https://doi.org/10.5281/zenodo.17637006}.

We compare our bounds~\eqref{eq:REUR-1} and~\eqref{eq:REUR-2} with the well known entropic uncertainty relation of Berta \textit{et al}.~\cite{berta2010uncertainty}:
\begin{equation}\label{eq:bertaEUR}
    H(\povm{M})_\gamma +H(\povm{N})_\gamma \ge -\max_{x,y}\left\|\sqrt{N_y}\sqrt{M_x}\right\|^{2}_{\infty}+H(\gamma_Q),
\end{equation}
where the measurements $\{M_x\}_x$ and $\{N_y\}_y$ are rank-one projective valued measurements (PVMs). Our bounds~\eqref{eq:REUR-1} and~\eqref{eq:REUR-2} apply to general POVMs; however, for a fair comparison with~\eqref{eq:bertaEUR}, we restrict to the rank-one PVM setting in this subsection.

For general POVMs, Eq.~\eqref{eq:bertaEUR} does not hold, although it does apply whenever at least one of the POVMs has rank-one elements~\cite{coles2011information}. Motivated by this fact, we also perform numerical tests on more general rank-one POVMs.

The three bounds compared in the numerical experiments are:
\begin{align*}
    \mathrm{EUR1} &:= -\max_{x,y}\log\left\|\sqrt{N_y}\sqrt{\gamma_Q}\sqrt{M_x}\right\|_2^2, \\
    \mathrm{EUR2} &:= H(\gamma_Q) + \max\left\{D\bigl(\gamma_Q \| \tilde\gamma_Q\bigr),D\bigl(\gamma_Q \| \tilde\eta_Q\bigr)\right\},\\
    \mathrm{EUR3} &:= H(\gamma_Q) -\max_{x,y}\left\|\sqrt{N_y}\sqrt{M_x}\right\|^{2}_{\infty}.
\end{align*}
We perform $100{,}000$ random trials for each configuration. The results are summarized in Table~\ref{table:Gaps_pvm} and Figure~\ref{fig:benchmark_pvm} for rank-one PVMs, and in Table~\ref{table:Gaps_povm} and Figure~\ref{fig:benchmark_povm} for general rank-one POVMs. Here $d$ denotes the Hilbert space dimension and $n$ the number of measurement outcomes.

\begin{table}[h!]
\centering
\setlength{\tabcolsep}{10pt}
\renewcommand{\arraystretch}{1.25}

\begin{tabular}{lccc}
\hline\hline
\textbf{Negative gaps counts} 
& \textbf{$d=2,\; n=2$} 
& \textbf{$d=3,\; n=3$} 
& \textbf{$d=4,\; n=4$} \\
\hline
EUR1 $<$ EUR3 & 76325 & 44045 & 39426 \\
EUR2 $<$ EUR3 & 394   & 0     & 0     \\
EUR2 $<$ EUR1 & 0     & 0     & 0     \\
\hline\hline
\end{tabular}

\caption{Counts of random trials (out of $100{,}000$) in which one uncertainty bound is strictly weaker than another for rank-one PVMs in dimension $d$ with $n=d$ outcomes. A negative gap for ``EUR1 $<$ EUR3'' means that the bound~\eqref{eq:REUR-1} is numerically weaker than the Berta \textit{et al.} bound~\eqref{eq:bertaEUR}.}
\label{table:Gaps_pvm}
\end{table}

\begin{table}[h!]
\centering
\setlength{\tabcolsep}{10pt}
\renewcommand{\arraystretch}{1.25}

\begin{tabular}{lccc}
\hline\hline
\textbf{Negative gaps counts}
& \textbf{$d=2,n=3$}
& \textbf{$d=2,n=4$}
& \textbf{$d=3,n=4$} \\
\hline
EUR1 $<$ EUR3 &  46123 & 44905  & 41855 \\
EUR2 $<$ EUR3 &  0     & 0          & 0    \\
EUR2 $<$ EUR1 &  0     & 0          & 0      \\
\hline\hline
\end{tabular}
\caption{Counts of negative gaps (out of $100{,}000$ random trials) for rank-one POVMs with varying dimensions $d$ and outcomes $n>d$.}
\label{table:Gaps_povm}
\end{table}


In Table~\ref{table:Gaps_pvm} we generate random rank-one PVMs. Since any rank-one PVM in dimension $d$ must have exactly $d$ outcomes, we only consider the case $n=d$. The numerical results show that the bound~\eqref{eq:REUR-1} is not uniformly comparable to the bound of Berta et al., while bound~\eqref{eq:REUR-2} is consistently stronger than~\eqref{eq:bertaEUR} for all cases with $n\ge 3$, and also always stronger than~\eqref{eq:REUR-1}.

We emphasize, however, that in general bound~\eqref{eq:REUR-2} is not strictly comparable with the Berta \textit{et al.} bound~\eqref{eq:bertaEUR}. Although Table~\ref{table:Gaps_pvm} shows no negative gaps for $d \ge 3$ in the random tests, explicitly fine-tuned counterexamples can be constructed. Consider mutually unbiased PVMs $\povm{M}$ and $\povm{N}$ and a state of the form
\[
\gamma_Q = (1-p)\frac{\idop}{d} + p\ketbra{\psi},
\]
where
\[
\ket{\psi} \propto \cos\theta\ket{m_x} + \sin\theta\ket{n_x},
\]
with $\ket{m_x}$ and $\ket{n_y}$ denoting mutually unbiased measurement bases. For parameters near $p\approx 0.75$ and $\theta\approx 45\degree$, negative gaps reproducibly appear, as shown in Figure~\ref{fig:MUB} for $d=3$ and $d=5$.

\begin{figure}[tb]
    \centering
    \includegraphics[width=0.46\textwidth]{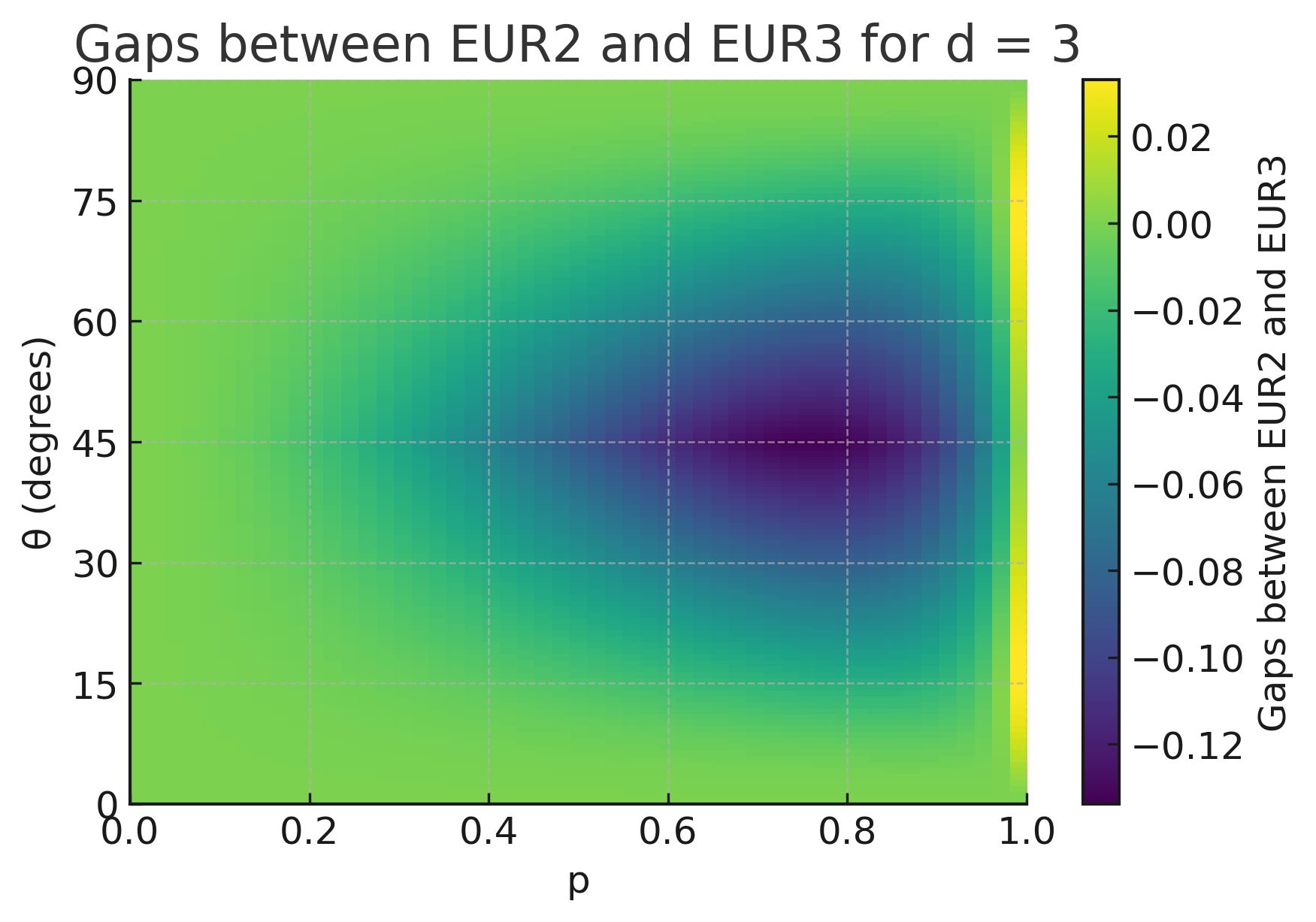}
    \includegraphics[width=0.46\textwidth]{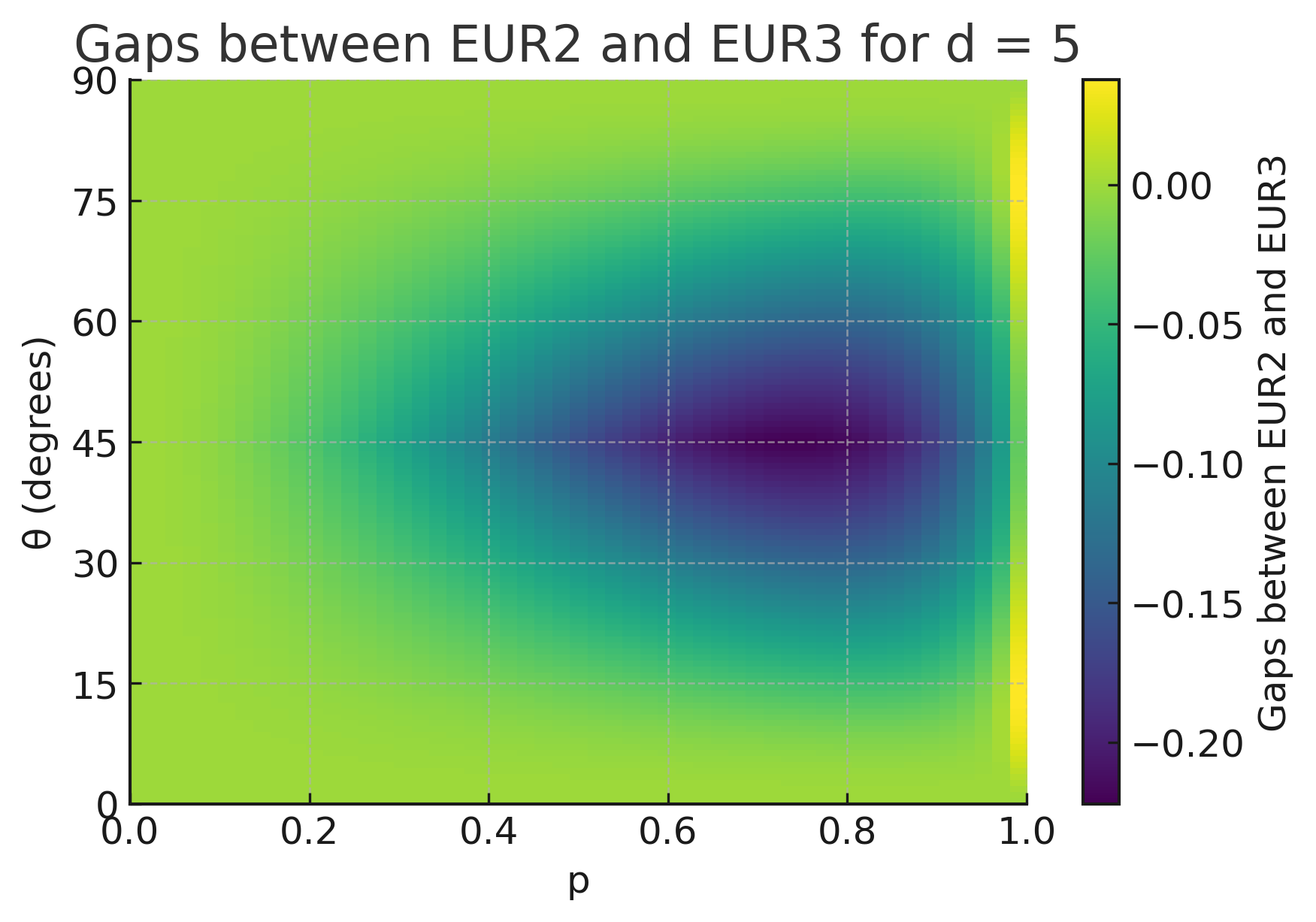}
    \caption{Negative gaps between EUR2~\eqref{eq:REUR-2} and EUR3~\eqref{eq:bertaEUR} for mutually unbiased PVMs in dimensions $d=3$ and $d=5$. The shaded regions correspond to parameter ranges $(p,\theta)$ where EUR2 becomes weaker than EUR3. These counterexamples occur for states that interpolate between the two bases and illustrate that EUR2 is not uniformly stronger than the Berta \textit{et al.} bound.}
    \label{fig:MUB}
\end{figure}

These figures indicate that such negative gaps persist over a nontrivial region of $(p,\theta)$ that depends on the dimension. Exact mutual unbiasedness is not required; near-MUB measurements exhibit the same behaviour. Random generation rarely produces such pairs in higher dimensions, which explains their absence from the general PVM tests. These counterexamples highlight a key conceptual difference: our retrodictive entropic uncertainty relation depends on the joint interplay between the state and the two measurements, whereas the Berta et al. relation does not incorporate this dependence.

As we extend to more general rank-one POVMs, similar behaviour appears (see Table~\ref{table:Gaps_povm} and Figure~\ref{fig:benchmark_povm}). The upper limit of EUR3 is now attained for uniformly split mutually unbiased POVMs of the form $\frac{1}{m}\ketbra{m_x}$ and $\frac{1}{m}\ketbra{n_y}$ with $m$ being some constant.

\section{Conclusions}\label{sec:conclusion}

Starting from the quantum minimum change principle, we have shown that, for all quantum to classical channels, the information projection \eqref{eq:q_optimize} onto fixed outcome statistics selects a single retrodictive update independent of the divergence used. This yields a canonical and divergence-independent quantum Bayesian inverse \eqref{eq:optimal-retrodiction} for any POVM and prior state, extending and unifying earlier constructions. Using this update, we introduced a symmetric retrodictive joint distribution \eqref{eq:retro-joint-pd} for pairs of POVMs and defined the mutual retrodictability $R(\povm{M};\povm{N})_\gamma$. Theorem~\ref{th:mutual-bound-2} shows that this quantity is always finite and bounded above by $H(\gamma_Q)$, which implies that the mixedness of the prior is the sole quantity limiting the strength of backward inferences between measurements.

We then used the retrodictive joint distribution to derive two entropic uncertainty relations, Eqs.~\eqref{eq:REUR-1} and~\eqref{eq:REUR-2}, which express uncertainty directly in terms of the prior state and the retrodictive action of the POVMs. The second bound links the entropic trade-off to relative entropies between the prior and its retrodictively updated versions, giving the relation a clear information-theoretic interpretation.

We benchmarked our retrodictive uncertainty relations against the Berta \textit{et al.} bound~\eqref{eq:bertaEUR}, both for rank-one PVMs and for rank-one POVMs in low dimensions. The numerical evidence shows that our bounds are consistently tighter across broad classes of measurements and states, while also identifying structured counterexamples such as near mutually unbiased settings. This contrast highlights a conceptual difference: our retrodictive bounds depend on the joint interplay between the state and the POVMs, whereas the Berta \textit{et al.} relation does not.

More broadly, our results show that the minimum change principle provides a robust and quantitatively strong framework for backward inference in quantum mechanics. They also suggest several directions for further work, including extensions to memory-assisted or sequential measurement scenarios and retrodictive formulations of operational tasks such as quantum cryptography and quantum hypothesis testing.

\section*{Acknowledgements} 
The authors acknowledge support from MEXT Quantum Leap Flagship Program (MEXT QLEAP) Grant No. JPMXS0120319794; from MEXT-JSPS  Grant-in-Aid for Transformative Research Areas  (A) ``Extreme Universe,'' No.~21H05183; and  from JSPS  KAKENHI Grant No.~23K03230.
J. K. and K.T. acknowledge support from JST SPRING Grant No. JPMJSP2125.

\bibliographystyle{alphaurl}
\bibliography{library}

\begin{sidewaysfigure}[tb]
    \centering
    \includegraphics[width=\textheight]{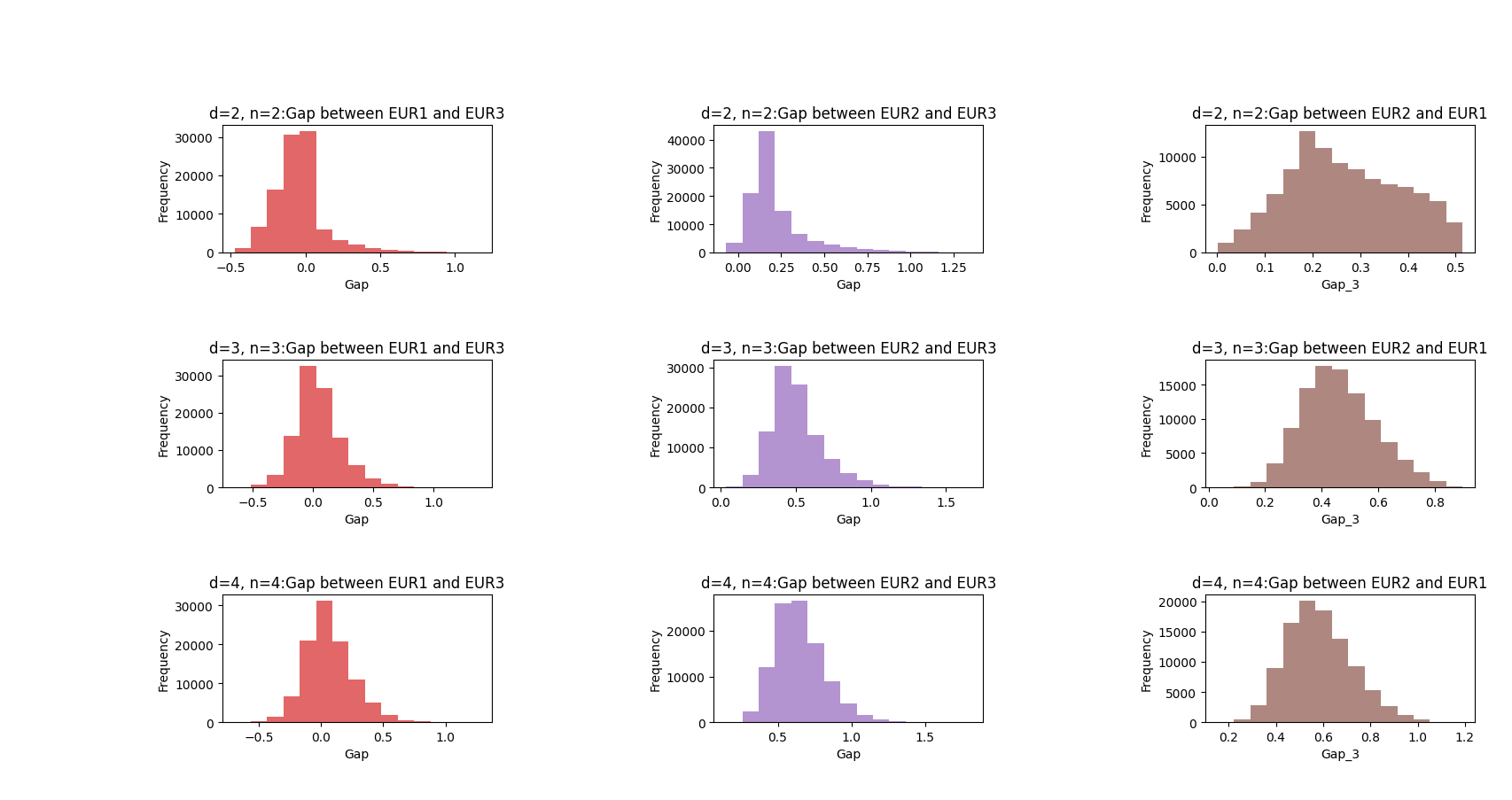}
    \caption{Numerical comparison of the bounds EUR1~\eqref{eq:REUR-1}, EUR2~\eqref{eq:REUR-2}, and EUR3~\eqref{eq:bertaEUR} for random rank-one PVMs. Each point corresponds to one of $100{,}000$ random trials. The plot illustrates that EUR2 is consistently tighter than the Berta \textit{et al.} bound for $d\ge 3$, while EUR1 shows no uniform ordering relative to EUR3.}
    \label{fig:benchmark_pvm}
\end{sidewaysfigure}
\begin{sidewaysfigure}[tb]
    \centering
    \includegraphics[width=\textheight]{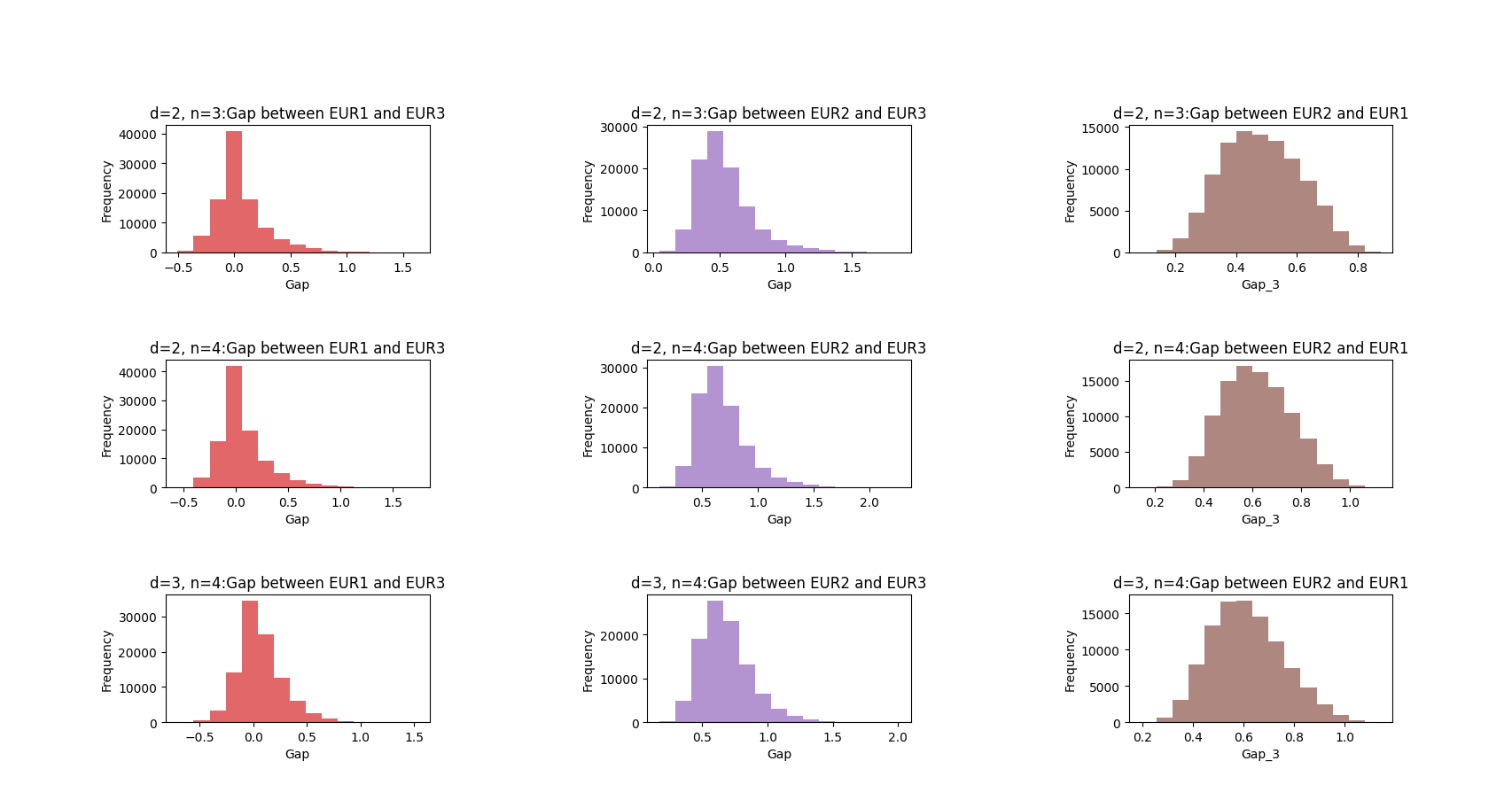}
    \caption{Numerical comparison of EUR1, EUR2, and EUR3 for random rank-one POVMs in various dimensions. The data show the same trend as in the projective case: EUR2 dominates EUR3 in nearly all instances, while EUR1 exhibits mixed behaviour. The increased variety of POVMs broadens the range of achievable overlaps and highlights the structural advantage of the retrodictive bound.}
    \label{fig:benchmark_povm}
\end{sidewaysfigure}

\end{document}